\title{Continuous Terrain Guarding with Two-Sided Guards}
\author{Wei-Yu Lai\thanks{Department of Computer Science and Information Engineering,  National Taiwan University of Science Technology, {\tt D10115005@mail.ntust.edu.tw}}
        \and
        Tien-Ruey Hsiang\thanks{Department of Computer Science and Information Engineering,  National Taiwan University of Science Technology, {\tt  trhsiang@csie.ntust.edu.tw}}}
\begin{document}
\thispagestyle{empty}
\maketitle

\begin{abstract}
Herein, we consider the continuous 1.5-dimensional(1.5D) terrain guarding problem with two-sided guarding. We provide an $x$-monotone chain $T$ and determine the minimal number of vertex guards such that all points of $T$ have been two-sided guarded. A point $p$ is two-sided guarded if there exist two vertices $v_i$ (left of $p$) and (right of $p$) that both see $p$. A vertex $v_i$ sees a point $p$ on $T$ if the line segment connecting $v_i$ to $p$ is on or above $T$. We demonstrate that the continuous 1.5D terrain guarding problem can be transformed to the discrete terrain guarding problem with a finite point set $X$ and that if $X$ is two-sided guarded, then $T$ is also two-sided guarded. Through this transformation, we achieve an optimal algorithm that solves the continuous 1.5D terrain guarding problem under two-sided guarding. 
\end{abstract}

\section{Introduction}
A 1.5 dimensional(1.5D)  terrain $T$ is an $x$-monotone polygonal chain in $\mathbb{R}^2$ specified by $n$ vertices $V(T) = \{v_1,..., v_i,..., v_n\}$, where $v_i = (x_i, y_i)$. The vertices induce $n-1$ edges $E(T) = \{ e_1,..., e_i,..., e_{n-1} \}$ with $e_i$ = $\overline{v_iv_{i+1}}$.

A point $p$ sees or guards $q$ if the line segment $\overline{pq}$ lies above or on $T$, or more precisely, does not intersect the open region bounded from above by $T$ and from the left and right by the downward vertical rays emanating from $v_1$ and $v_n$.

There are two types of terrain guarding problems: (1) continuous terrain guarding (CTG) problem, with objective of determining a subset of $T$ with minimum cardinality that guards $T$, and (2) discrete terrain guarding problem, with the objective of determining a subset of $U$ with minimum cardinality guarding $X$, given that the two point sets $U$ and $X$ are on $T$.

Many studies have referred to applications of 1.5D terrain guarding in real world~\cite{EA,finite_set,FA}. The examples include guarding or covering a road with security cameras or lights and using line-of-sight transmission networks for radio broadcasting.

\subsection{Related Work}
Ample research has focused on the 1.5D terrain guarding problem, which can be divided into the general terrain guarding problem and the orthogonal terrain guarding problem.

In a 1.5D terrain, King and Krohn \cite{NP} proved that the general terrain guarding problem is NP-hard through planar 3-SAT.

Initial studies on the 1.5D terrain guarding problem discussed the design of a constant-factor approximation algorithm. Ben-Moshe et al. \cite{F-app} gave the first constant-factor approximation algorithm for the terrain guarding problem and left the complexity of the problem open. King~\cite{5-app} gave a simple 4-approximation, which was later determined to actually be a 5-approximation. Recently, Elbassioni et al. \cite{4-app} gave a 4-approximation algorithm.

Finally, Gibson et al. \cite{D-PTAS} considered the discrete terrain guarding problem by finding the minimal cardinality from candidate points that can see a target point \cite{D-PTAS} and proved the presence of a planar graph that appropriately relating the local and global optima; thus, the discrete terrain guarding problem allows a polynomial time approximation scheme (PTAS) based on local search. Friedrichs et al. \cite{C-PTAS} revealed that for the continuous 1.5D terrain guarding problem, finite guard and witness sets ($G$ and $X$, respectively) can be constructed such that an optimal guard cover $G'' \subseteq G$ that covers terrain $T$ is present and when these guards monitor all points in $X$, the entire terrain is guarded. According to \cite{D-PTAS}, the continuous 1.5D terrain guarding problem can apply PTAS by constructing a finite guard and witness set with the former PTAS.

Some studies have considered orthogonal terrain $T$. $T$ is called an orthogonal terrain if each edge $e\in E(T)$ is either horizontal or vertical. An orthogonal terrain has four vertex types. If $v_i$ is a vertex of orthogonal terrain and the angle $\angle{v_{i-1}v_iv_{i+1}}=\pi/2$, then $v_i$ is a convex vertex, otherwise it is a reflex vertex. A convex vertex $v_i$ is left(right) convex if $\overline{v_{i-1}v_i}$($\overline{v_iv_{i+1}}$) is vertical. A reflex vertex $v_i$ is left(right) reflex if $\overline{v_{i-1}v_i}$($\overline{v_iv_{i+1}}$) is horizontal.

Katz and Roisman \cite{2-OTG} gave a 2-approximation algorithm for the problem of guarding the vertices of an orthogonal terrain. The authors constructed a chordal graph demonstrating the relationship of visibility between vertices. On the basis of \cite{F.Gavril}, \cite{2-OTG} gave a 2-approximation algorithm and used the minimum clique cover of a chordal graph to solve the right(left) convex vertex guarding problem. 

Lyu and {\" U}ng{\"o}r~\cite{nlogm} gave a 2-approximation algorithm for the orthogonal terrain guarding problem that runs in $O(n\log m)$, where $m$ is the output size. The authors also gave an optimal algorithm for the subproblem of the orthogonal terrain guarding problem. On the basis of the vertex type of the orthogonal terrain, the objective of the subproblem is to determine a minimum cardinality subset of $V(T)$ guarding all right(left) convex vertices of $V(T)$; furthermore, the optimal algorithm uses stack operations to reduce time complexity.

The $O(n\log m)$ time 2-approximation algorithm has previously been considered the optimal algorithm for the orthogonal terrain guarding problem. However, some studies have used alternatives to the approximation algorithm.

Durocher et al. \cite{O-OTG} gave a linear-time algorithm for guarding the vertices of an orthogonal terrain under a directed visibility model, where a directed visibility mode considers the different visibility for types of vertex. If $u$ is a reflex vertex, then $u$ sees a vertex $v$ of $T$, if and only if every point in the interior of the line segment $uv$ lies strictly above $T$. If $u$ is a convex vertex, then $u$ sees a vertex $v$ of $T$, if and only if $\overline{uv}$ is a nonhorizontal line segment that lies on or above $T$. Khodakarami et al. \cite{range} considered the guard with guard range. They presented a fixed-parameter algorithm that found the minimum guarding set in time $O(4^k\cdot k^2 \cdot n)$, where $k$ is the terrain guard range.

\subsection{Result and Problem Definition}
In this paper, we define the CTG problem with two-sided guards and propose an optimal algorithm for the 1.5D CTG problem with two-sided guards. To the best of our knowledge, the 1.5D CTG problem with two-sided guards has never been examined.

\begin{figure}
\begin{center}
\includegraphics[scale=0.6]{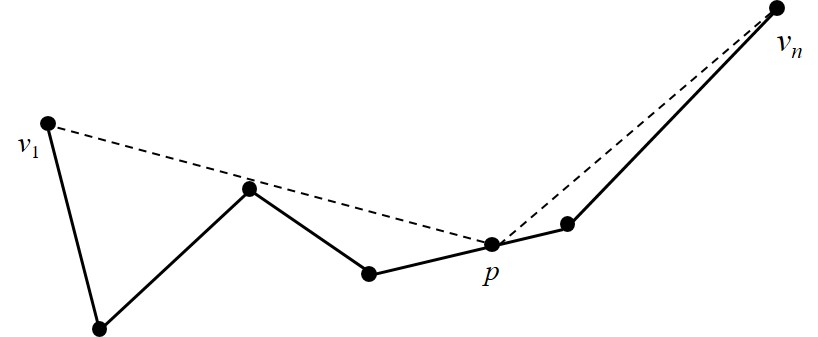}
\caption{Point $p$ is two-sided guarded by $v_1$ and $v_n$}
\label{fig1}
\end{center}
\end{figure}

{\bf Definition 1} (Two-Sided Guarding). A point $p$ on a 1.5D terrain is two-sided guarded if there exist two distinct guards $u$, which is on or to 
the left of $p$, and $v$, which is on or to the right of $p$, such that $p$ can be seen by both $u$ and $v$. Furthermore, 
the guards $u$ and $v$ are called a left guard and a right guard of $p$.

Fig.~\ref{fig1} illustrates an example where vertex $v_1$ left guards $p$ and $v_n$ right guards $p$. In this paper, we define the following problem:

{\bf Definition 2}(CTGTG: Continuous Terrain Guarding with Two-Sided Guards) Given a 1.5D terrain $T$, find a vertex guard set $S$ of minimum cardinality such that every point of $T$ can be two-sided guarded. 

\subsection{Paper Organization}
Section 2 presents preliminaries, Section 3 demonstrates how to create a finite point set for the CTGTG model, Section 4 gives an algorithm for the CTGTG, along with its proof, and Section 5 presents our conclusions.

\section{Preliminaries}
Let $p$ and $q$ be two points on a 1.5D terrain, 
we write $p\prec q$ if $p$ is on the left of $q$. We denote the visible region of $p$ by 
$vis(p) = \{v|v\in V(T)$ and $v$ sees $p\}$. For a $vis(p)$, let $L(p)$ be the leftmost vertex in $vis(p)$ and $R(p)$ be the rightmost vertex in $vis(p)$.

Given a CTGTG instance, let $OPT=\{o_1,o_2,...,o_m\}$ be an optimal guard set, where $o_k\prec o_{k+1}$ for $k=1,...,m-1$.
For a point $p$ on the terrain, let $O_R(p)$ and $O_L(p)$ be the subsets of $OPT$ such that $p$ is right guarded by every guard in $O_R(p)$ and left guarded by every guard in $O_L(p)$.
We also define $N_i^R$ as the rightmost point on the terrain that is not right guarded by 
$\{ o_i,o_{i+1},...,o_m\}$ and $N_i^L$ as the leftmost point on the terrain that is not left guarded by $\{ o_1,o_2,...,o_i\}$.



An important visible property on 1.5D terrains is as follows:

\begin{lemma}[\cite{F-app}]
\label{order lemma}
Let $a$, $b$, $c$ and $d$ be four points on a terrain $T$ such that $a \prec b \prec c \prec d$. If $a$ sees $c$ and $b$ sees $d$, then $a$ sees $d$.
\end{lemma}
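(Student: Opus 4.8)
The plan is to compare the segment $\overline{ad}$ against the two segments already known to lie on or above $T$ --- namely $\overline{ac}$ and $\overline{bd}$ --- by first locating a point where those two segments cross and then splitting $\overline{ad}$ there. Throughout I would use $x$-monotonicity of $T$ and treat $\overline{ac}$, $\overline{bd}$, $\overline{ad}$, and $T$ as graphs over the $x$-axis, writing $\overline{ac}(x)$, etc., for the heights. First, since $a$ sees $c$ and $x_a<x_b<x_c$, the point of $\overline{ac}$ over $x_b$ cannot lie below the terrain point $b$, so $\overline{ac}(x_b)\ge y_b=\overline{bd}(x_b)$; symmetrically, since $b$ sees $d$ and $x_b<x_c<x_d$, we get $\overline{bd}(x_c)\ge y_c=\overline{ac}(x_c)$. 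Hence the linear function $\overline{ac}(x)-\overline{bd}(x)$ is nonnegative at $x_b$ and nonpositive at $x_c$, so $\overline{ac}$ and $\overline{bd}$ meet at a point $z$ with $x_b\le x_z\le x_c$; the same two inequalities also give $\operatorname{slope}(\overline{ac})\le\operatorname{slope}(\overline{bd})$.

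Next I would argue that $z$ lies on or below $\overline{ad}$. Since $z\in\overline{ac}$, the segment $\overline{az}$ has slope $\operatorname{slope}(\overline{ac})$, and since $z\in\overline{bd}$, the segment $\overline{zd}$ has slope $\operatorname{slope}(\overline{bd})$, so by the previous step the first slope is at most the second. Because $x_a<x_z<x_d$, the slope of the chord $\overline{ad}$ is a convex combination of these two slopes and hence is at least $\operatorname{slope}(\overline{az})$, which yields $\overline{ad}(x_z)\ge y_z$.

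Finally I would assemble the bound on the two halves of $[x_a,x_d]$. On $[x_a,x_z]$ the lines $\overline{ad}$ and $\overline{ac}$ agree at $a$ and satisfy $\overline{ad}(x_z)\ge\overline{ac}(x_z)$, so $\overline{ad}\ge\overline{ac}$ throughout; and $[x_a,x_z]\subseteq[x_a,x_c]$, where $\overline{ac}\ge T$. On $[x_z,x_d]$ the lines $\overline{ad}$ and $\overline{bd}$ agree at $d$ and satisfy $\overline{ad}(x_z)\ge\overline{bd}(x_z)$, so $\overline{ad}\ge\overline{bd}$ throughout; and $[x_z,x_d]\subseteq[x_b,x_d]$, where $\overline{bd}\ge T$. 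Therefore $\overline{ad}\ge T$ on all of $[x_a,x_d]$, and since this interval lies inside $[x_1,x_n]$ the segment $\overline{ad}$ also avoids the two downward vertical rays from $v_1$ and $v_n$; hence $a$ sees $d$.

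I expect the only genuinely non-routine point to be the middle step: seeing that the crossing point of the two witness segments must lie weakly below the chord $\overline{ad}$. It rests entirely on the slope inequality $\operatorname{slope}(\overline{ac})\le\operatorname{slope}(\overline{bd})$, which is just the geometric content of the facts that $b$ is below $\overline{ac}$ and $c$ is below $\overline{bd}$. The degenerate possibilities $z=b$ or $z=c$ (and any non-strictness in the ordering $a\prec b\prec c\prec d$) are absorbed by the same inequalities, so they need no separate treatment.
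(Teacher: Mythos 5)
Your proof is correct. Note that the paper does not actually prove this statement: it is the standard ``order claim'' imported from Ben-Moshe et al.~\cite{F-app}, accompanied only by a schematic figure. Your argument --- locating the crossing point $z$ of $\overline{ac}$ and $\overline{bd}$ via the two height inequalities $\overline{ac}(x_b)\ge y_b$ and $\overline{bd}(x_c)\ge y_c$, showing $z$ lies weakly below $\overline{ad}$ via the slope comparison, and then dominating $T$ by $\overline{ac}$ on $[x_a,x_z]$ and by $\overline{bd}$ on $[x_z,x_d]$ --- is the standard self-contained proof and is sound, including the check that $[x_a,x_d]$ avoids the downward rays at $v_1$ and $v_n$.
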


\begin{figure}
\begin{center}
\includegraphics[scale=0.7]{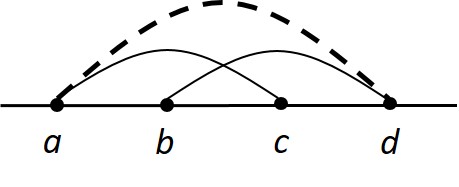}
\caption{Schematic of Lemma~\ref{order lemma}.}
\label{fig2}
\end{center}
\end{figure}

Fig.~\ref{fig2} is a schematic of Lemma 1. Because $T$ is an $x$-monotone chain, we use a straight line to demonstrate the relation between $x$-coordinate of points and an arc to show the visible relation among points on $T$. In this report, we use a straight line to simplify the explanations.


\begin{obs}
\label{obs_l}
Assume point $x$ is on $e_j$. If $x$ is left guarded by $v$ then $v_{j+1}$ is also left guarded by $v$. 
\end{obs}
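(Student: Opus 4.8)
The plan is to derive the visibility ``$v$ sees $v_{j+1}$'' from the hypothesis ``$v$ sees $x$'' by a single application of the order lemma (Lemma~\ref{order lemma}), exploiting the elementary fact that each edge of $T$ lies on $T$, so two consecutive vertices always see each other. First I would dispose of the degenerate positions of $x$ on $e_j$: if $x=v_{j+1}$ then $v$ already left guards $v_{j+1}=x$ and there is nothing to prove, so I may assume that $x$ lies strictly between $v_j$ and $v_{j+1}$, i.e.\ in the relative interior of $e_j$ (this is exactly the regime in which the observation is used later).

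Next I would pin down the position of $v$. Since $v\in V(T)$ lies on or to the left of $x$ and the interior of $e_j$ contains no vertex, the only possibilities are $v=v_j$ or $v\prec v_j$. In the first case $v=v_j$ sees $v_{j+1}$ because $\overline{v_jv_{j+1}}=e_j$ lies on $T$, so we are done. In the second case we now have four points of $T$ in exactly the cyclic order demanded by Lemma~\ref{order lemma}, namely $v\prec v_j\prec x\prec v_{j+1}$.

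The key step is then to invoke Lemma~\ref{order lemma} with $a=v$, $b=v_j$, $c=x$, $d=v_{j+1}$: here ``$a$ sees $c$'' is precisely the hypothesis that $v$ sees $x$, and ``$b$ sees $d$'' is the trivial visibility of $v_{j+1}$ from $v_j$ along the edge $e_j$; the lemma therefore yields ``$a$ sees $d$'', that is, $v$ sees $v_{j+1}$. Since $v$ is on or to the left of $v_j$, which is to the left of $v_{j+1}$, the vertex $v$ is a legitimate left guard of $v_{j+1}$, which finishes the argument.

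I do not expect a real obstacle: once the intermediate points $v_j$ and $x$ are chosen, the conclusion is immediate from Lemma~\ref{order lemma}. The only thing that needs care is the boundary of $e_j$ — the argument is meant to be carried out with $x$ interior to $e_j$; the endpoint $x=v_j$ is not needed and is in fact exceptional (if $v_j$ is a reflex vertex, $v$ may see $v_j$ without seeing the next vertex $v_{j+1}$), which is why I peel that case off at the start. If one wished to avoid Lemma~\ref{order lemma}, the same statement follows by a direct computation: one first argues that $v$ must lie on or above the line supporting $e_j$ (otherwise $\overline{vx}$ would pass strictly below $v_j$, contradicting $v$ sees $x$), and then checks that $\overline{v\,v_{j+1}}$ stays on or above $T$ by splitting the relevant $x$-interval at $x$ — to the left of $x$ the segment $\overline{v\,v_{j+1}}$ dominates $\overline{vx}$, which dominates $T$, and to the right of $x$ the terrain coincides with the supporting line of $e_j$, below which $\overline{v\,v_{j+1}}$ cannot dip. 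I would nonetheless present the one-line argument via Lemma~\ref{order lemma}.
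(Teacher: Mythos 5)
Your proof is correct: the paper states this as an observation without any proof, and your argument via Lemma~\ref{order lemma} with $a=v$, $b=v_j$, $c=x$, $d=v_{j+1}$ (using that $v_j$ sees $v_{j+1}$ along $e_j$) is exactly the standard order-claim argument the authors evidently intend. Your handling of the degenerate positions of $x$ and of $v=v_j$ is careful and correct, and your remark that the literal statement fails for $x=v_j$ at a reflex vertex is a fair (and accurate) caveat about how the observation must be read.
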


\begin{obs}
\label{obs_r}
Assume point $x$ is on $e_j$. If $x$ is right guarded by $v$ then $v_{j}$ is also right guarded by $v$. 
\end{obs}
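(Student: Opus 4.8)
The plan is to reduce the claim to the order lemma (Lemma~\ref{order lemma}). Write $e_j=\overline{v_jv_{j+1}}$. Since the point $x$ lies on $e_j$ and $v$ right guards $x$, we have $v_j\preceq x\preceq v$, so $v$ already lies on or to the right of $v_j$; the whole content of the claim is therefore that $v$ \emph{sees} $v_j$. Two cases are immediate: if $x=v_j$ there is nothing to prove, and if $v=v_{j+1}$ then $\overline{vv_j}=e_j\subseteq T$, so $v$ sees $v_j$. Hence assume from now on that $x$ lies in the relative interior of $e_j$ and that $v\neq v_{j+1}$.

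Next I would upgrade ``$v$ lies on or to the right of $x$'' to ``$v$ lies strictly to the right of $v_{j+1}$''. This uses only that $T$ is an $x$-monotone chain: the vertices are ordered left to right, and since $v$ is a vertex lying on or to the right of the interior point $x$ of $e_j$ (hence strictly to the right of $v_j$), it belongs to $\{v_{j+1},\dots,v_n\}$; as $v\neq v_{j+1}$ it is strictly to the right of $v_{j+1}$. We thus obtain the strict chain $v_j\prec x\prec v_{j+1}\prec v$. Now $v_j$ sees $v_{j+1}$ because $\overline{v_jv_{j+1}}=e_j$ lies on $T$, and $x$ sees $v$ by hypothesis, so Lemma~\ref{order lemma} applied with $a=v_j$, $b=x$, $c=v_{j+1}$, $d=v$ yields that $v_j$ sees $v$, which is exactly what is needed. (Observation~\ref{obs_l} is the mirror image of this argument, exchanging ``left'' and ``right''.)

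The step I expect to be the delicate one is precisely this reduction to a nondegenerate configuration: guaranteeing the strict ordering $v_j\prec x\prec v_{j+1}\prec v$ so that Lemma~\ref{order lemma} is applicable, and disposing of the boundary cases cleanly. The reason the ``interior of $e_j$'' reading matters is geometric: because $x$ is interior to $e_j$, the terrain immediately to the right of $x$ still runs along the supporting line of $e_j$, which forces the segment $\overline{vx}$, and hence $v$, to lie on or above that line; this is what lets visibility propagate from $x$ leftward to the endpoint $v_j$. If instead one allowed $x$ to be the right endpoint $v_{j+1}$, this propagation can genuinely fail (a vertex far to the right can see $v_{j+1}$ along the next edge while the segment to $v_j$ dips below $e_j$), so it is essential to interpret ``$x$ is on $e_j$'' as ``$x$ lies in the relative interior of $e_j$'', or equivalently to assign each point of $T$ to a single edge.
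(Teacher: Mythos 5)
Your proof is correct. The paper states Observation~\ref{obs_r} without any proof, and your argument via Lemma~\ref{order lemma} (with $a=v_j$, $b=x$, $c=v_{j+1}$, $d=v$, using that $v_j$ sees $v_{j+1}$ along $e_j$ itself) is exactly the argument the authors evidently have in mind, since the order claim is their only visibility tool. Your added precision is worthwhile: the observation as literally stated is false if $x$ is allowed to be the right endpoint $v_{j+1}$ (a guard can see a peak $v_{j+1}$ from far right without seeing $v_j$), so the statement must be read with $x$ in the relative interior of $e_j$ or at $v_j$ --- which is consistent with how the paper uses it, namely for the discretization points placed strictly inside edges.
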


\section{Discretization}

\begin{figure}
\begin{center}
\includegraphics[scale=0.7]{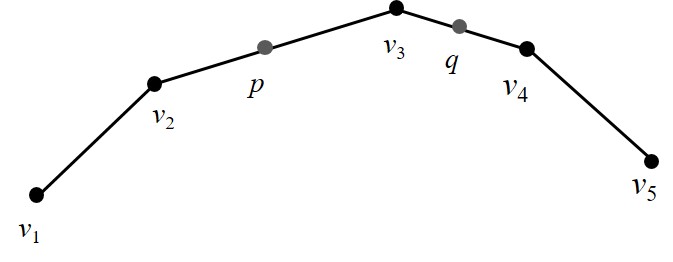}
\caption{$V(T)$ is right guarded and left guarded by $\{v_1, v_2, v_4, v_5\}$, but not $T$.}
\label{V(T) is not enough}
\end{center}
\end{figure}

Although $V(T)$ are right guarded and left guarded, $T$ is not necessarily right guarded and left guarded. In Fig. \ref{V(T) is not enough}, $V(T)$ is right guarded and left guarded by $\{v_1, v_2, v_4, v_5\}$ with minimal cardinality. The vertices $v_1$ and $v_2$ are left guarded by $v_1$ and right guarded by $v_2$. Vertices $v_4$ and $v_5$ are left guarded by $v_4$ and right guarded by $v_5$. Vertex $v_3$ is left guarded and right guarded by $v_2$ and $v_4$, respectively. Only $v_3$ can right guard $p$ and left guard $q$ where $p$ is on $e_2$ and $q$ is on $e_3$, but $v_3 \notin \{v_1, v_2, v_4, v_5 \}$. In our example, we must create a point set $X$ such that if $X$ is right guarded and left guarded, then $T$ is also.

{\bf Definition 3} (Boundary Point). If line $\overline{v_iv_j}$ and $e_k$ have an intersection point $f\notin \{ v_k, v_{k+1} \}$, and $v_i$ and $v_j$ can see $f$ then $f$ is the boundary point.

In Fig.~\ref{fig:7}, we provide an example with four boundary points: $f_1, f_2, f_3$ and $f_4$. Boundary point $f_1$ is from $v_6$, $f_2$ is from $v_4$; and boundary points $f_3$ and $f_4$ are from $v_1$. We say $e_1$ has two boundary points, $f_1$ and $f_2$; each of $e_4$ and $e_6$ has a boundary point. 

\begin{figure}
\begin{center}
\includegraphics[scale=0.5]{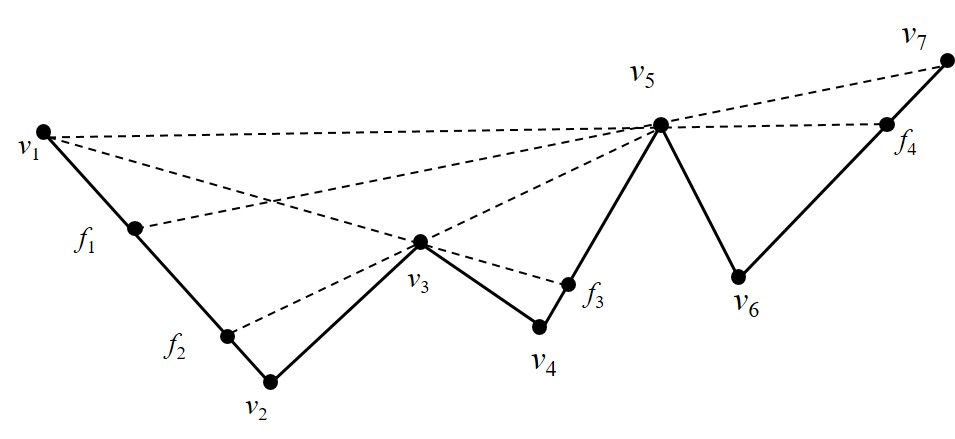}
\caption{Points $f_1$, $f_2$, $f_3$ and $f_4$ are boundary points on $T$.}
\label{fig:7}
\end{center}
\end{figure}

\begin{lemma}
\label{x lemma}
For an edge $e_i$ on terrain $T$, there exists at most two points $p$ and $q$ that exclude $v_i$ and $v_{i+1}$ such that $e_i$ is complete two sided guarded if $p$ and $q$ are two sided guarded.
\end{lemma}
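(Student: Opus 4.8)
The plan is to analyze, for a fixed edge $e_i = \overline{v_i v_{i+1}}$, which points of the open edge can fail to be two-sided guarded once $v_i$ and $v_{i+1}$ are two-sided guarded, and to show this failure set is captured by at most two critical points — intuitively, the extreme boundary points on $e_i$. I would split the two-sided guarding condition into its two halves and treat left-guarding and right-guarding separately, since they are symmetric. Consider the left-guarding side: for a point $x$ on $e_i$, the set of vertices that can left-guard $x$ is monotone in a controlled way. By Observation~\ref{obs_l}, if a vertex $v$ (with $v \preceq x$) left-guards some $x$ on $e_i$, then $v$ left-guards $v_{i+1}$; combined with Lemma~\ref{order lemma}, this should let me argue that the "hardest" point on $e_i$ to left-guard — the one whose left-visible set $L$-side shrinks the most — is a single well-defined point, namely the leftmost boundary point on $e_i$ arising from a vertex to the left of $e_i$ (or $v_i$ itself if no such boundary point exists on $e_i$).

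Concretely, I would proceed as follows. First, fix the guard set and define, for each $x$ on the open edge $e_i$, the predicate $\mathrm{LG}(x)$ = "$x$ is left-guarded." Using Lemma~\ref{order lemma} with $a = $ a candidate left guard, $b$ and $c$ points on $e_i$, and $d = v_{i+1}$, I would show the order-claim: if $x' \prec x$ are both on $e_i$ and $\mathrm{LG}(x)$ holds while some vertex $v$ with $v \preceq x'$ witnesses it, then... — more usefully, I want monotonicity of a different flavor: the set of $x \in e_i$ that are left-guarded is an interval whose left endpoint is a boundary point (or an endpoint of $e_i$). The key mechanism is that visibility of a fixed vertex $v$ to points along the segment $e_i$ is "connected": $\{x \in e_i : v \text{ sees } x\}$ is a sub-segment of $e_i$, because the line-of-sight condition (segment $\overline{vx}$ on or above $T$) fails, as $x$ moves, precisely when $\overline{vx}$ first touches $T$, i.e., at a boundary point. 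Taking the union over all valid left guards $v$ and invoking Lemma~\ref{order lemma} to show these sub-segments are nested or overlapping in the right pattern, I would conclude the left-guarded subset of $e_i$ is a single sub-segment $[p_L, v_{i+1}]$ with $p_L$ a boundary point or equal to $v_i$. Symmetrically the right-guarded subset is $[v_i, q_R]$. Hence the two-sided-guarded subset of $e_i$ is $[p_L, q_R]$ (when nonempty), and if $p_L$ and $q_R$ (the two candidate points $p$ and $q$) are two-sided guarded, then — using that $v_i, v_{i+1}$ are already two-sided guarded to cover the endpoints, plus Lemma~\ref{order lemma} to propagate guarding from $p$ and $q$ across the whole sub-segment — all of $e_i$ is two-sided guarded.

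The main obstacle I anticipate is establishing the "nesting/union is an interval" step rigorously: a priori, different left-guarding vertices $v, v'$ could see incomparable sub-segments of $e_i$, and I need Lemma~\ref{order lemma} (the order property) to rule out a configuration where the left-guarded set is disconnected or where its left endpoint is not a single boundary point. Specifically, suppose $v$ sees an initial portion $[v_i, x_1)$-ish region near $v_i$ and $v'$ sees a portion near $v_{i+1}$ but neither sees the middle; I must derive a contradiction with the order property, likely by choosing the four points across the gap and the two guards and showing a forced extra visibility. A secondary technical point is the boundary-point bookkeeping at the extremes: if no line $\overline{v_j v_k}$ crosses $e_i$ in its interior from the relevant side, then the corresponding critical point degenerates to $v_i$ or $v_{i+1}$, which are already two-sided guarded by hypothesis, so the lemma's "at most two" count is tight and the statement holds vacuously on that side. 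I would handle these degenerate cases explicitly before the main argument.
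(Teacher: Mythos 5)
Your structural picture matches the paper's: you decompose two-sided guarding into its left and right halves, observe that the set of points of $e_i$ seen by a fixed left-guard candidate $v$ is a sub-segment anchored at $v_{i+1}$ (via Observation~\ref{obs_l} and the order lemma), and identify the boundary points as the only places where per-vertex visibility along $e_i$ can change. The paper's proof is exactly this case analysis on the number of boundary points $f_1,\dots,f_k$ on $e_i$. However, your proposal has a genuine gap in where the two witness points are placed, and the final implication fails with your choice.

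You take the witnesses to be $p_L$ and $q_R$, the endpoints of the left-guarded and right-guarded intervals (or, in your first paragraph, the leftmost boundary point itself). Two problems. First, $p_L$ and $q_R$ depend on the guard set, but this lemma is the discretization step: the points $p,q$ must be determined by $T$ alone, before any guard set is chosen, since they go into the witness set $X$ that the algorithm subsequently covers. Second, and more fatally, placing a witness \emph{at} a boundary point does not certify the edge. If $v$'s line of sight to $e_i$ is tangent to the terrain at some vertex $u$, then $v$ sees exactly $[f_1, v_{i+1}]$ (visibility is ``on or above $T$'', so the tangent sightline to $f_1$ counts), and such a $v$ left-guards $f_1$ while left-guarding no point of the open segment between $v_i$ and $f_1$. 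So a guard set can two-sided guard $f_1$ and your symmetric point $q_R$ and still leave $(v_i,f_1)$ uncovered; no propagation via Lemma~\ref{order lemma} can repair this, because those points simply are not left-guarded. The fix, which is what the paper does, is to place $p$ \emph{strictly inside} the first sub-segment $(v_i, f_1)$ and $q$ strictly inside the last sub-segment $(f_k, v_{i+1})$: any vertex $v\preceq v_i$ that left-guards such a $p$ must have its visibility interval on $e_i$ begin at a boundary point to the left of $p$ or at $v_i$, and since no boundary point lies left of $f_1$, that vertex sees all of $e_i$; symmetrically for $q$ and right guards. That is the step your write-up is missing, and your stated choice of witnesses does not satisfy it.
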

\begin{proof}

According to the number of boundary points on $e_i$, we may consider the proof under the following heads: edge $e_i$ does not have boundary point or has one, two, or $k$ boundary points (where $k \geq$ 3).

In the first case, we assume $e_i$ does not have boundary point. Let point $p \notin \{ v_i,v_{i+1} \}$ be on edge $e_i$. If $p$ is right guarded and left guarded, then edge $e_j$ is also right guarded and left guarded. 

In the second case, we assume $e_i$ has a boundary point $f$. We divided the edge into two line segments $\overline{v_i f}$ and $\overline{v_{i+1} f}$. Then, the first case can be applied to the line segments $\overline{v_i f}$ and $\overline{f v_{i+1}}$. Therefore, we create two points $p \notin \{ v_i,f,v_{i+1}\}$ on line segment $\overline{v_i f}$ and $q \notin \{ v_i,f,v_{i+1}\}$ on line segment $\overline{fv_{i+1}}$. If $p$ and $q$ are right guarded and left guarded, then $e_i$ is also right guarded and left guarded. 

In the third case, we assume $e_i$ has two boundary points $f_1$ and $f_2$. We divided the edge into three line segments $\overline{v_if_1}$, $\overline{f_1f_2}$ and $\overline{f_2v_{i+1}}$. The line segments $\overline{v_if_1}$ and $\overline{f_2v_{i+1}}$ can be reduced to the first case. Therefore, we create two points $p \notin \{ v_i,f_1 \}$ on line segment $\overline{f_2v_{i+1}}$ and  $q \notin \{  f_2,v_{i+1} \}$ on line segment $\overline{f_2v_{i+1}}$.
If $p$ and $q$ are left guarded and right guarded, then line segment $\overline{f_1f_2}$ is also left guarded and right guarded.

In the final case, we assume $e_i$ has $k$ boundary points $f_1,...,f_k$. We divide the edge into $k+1$ line segments $L = \{ \overline{v_if_1},\overline{f_1f_2},...,\overline{f_k v_{i+1}} \} $. The line segments $\overline{v_if_1}$ and $\overline{f_kv_{i+1}}$ can be reduced to the first case. 
Therefore, we create two points: $p \notin \{ v_i, f_1 \}$ on line segment $\overline{v_if_1}$ and $q \notin \{ f_k, v_{i+1} \}$ on line segment $\overline{f_kv_{i+1}}$.  
If $p$ and $q$ are left guarded and right guarded, then each line segment $\overline{f_cf_{c+1}} \in L$ is also left and right guarded.
\end{proof}

From the construction of Lemma~\ref{x lemma}, in order to completely two-sided guard a terrain, it is sufficient to first select a finite subset $X$ of positions from the terrain to be two-sided guarded, such that $|X|\leq 2(n-1)$.

\section{4. Optimal Algorithm for the CTGTG}

In this section, we present an optimal algorithm for the CTGTG. The idea of the algorithm follows from Observation~\ref{include $v_1$ and $v_n$}. In each step of our algorithm, we add a vertex $v_i$ to our result $S$ such that if $v_i \notin OPT$ then $v_i$ can replace a vertex $v_j \in OPT$ and $|S| = |OPT|$.     
 
\begin{obs}
\label{include $v_1$ and $v_n$}
The optimal solution of the CTGTG includes $v_1$ and $v_n$. 
\end{obs}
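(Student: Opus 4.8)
The plan is to argue directly from Definition~1, separately for the two endpoints. First I would consider $v_1$. Let $S$ be any feasible guard set for the CTGTG instance (in particular $S = OPT$). Since every point of $T$ must be two-sided guarded, $v_1$ in particular must have a left guard: a vertex $u \in S$ that sees $v_1$ and is on or to the left of $v_1$. Because $T$ is an $x$-monotone chain and $v_1$ is its leftmost vertex, no vertex of $V(T)$ lies strictly to the left of $v_1$, and no vertex other than $v_1$ coincides with $v_1$; hence the only candidate for such a left guard is $v_1$ itself (which trivially sees itself, the connecting segment being degenerate and lying on $T$). Therefore $v_1 \in S$, and in particular $v_1 \in OPT$.

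By the mirror-image argument, I would then handle $v_n$: feasibility forces $v_n$ to have a right guard $v \in S$ that sees $v_n$ and is on or to the right of $v_n$, and since $v_n$ is the rightmost vertex of $T$ the only such vertex is $v_n$ itself, so $v_n \in S$ and hence $v_n \in OPT$. Combining the two cases yields $\{v_1, v_n\} \subseteq OPT$, as claimed. (One should also note that the distinctness requirement in Definition~1 is not an obstacle here: $v_1$ still needs some right guard distinct from $v_1$ and $v_n$ some left guard distinct from $v_n$, but this is guaranteed by the assumed feasibility of the instance and does not affect the necessity of $v_1$ and $v_n$ themselves.)

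The only point requiring care — and the closest thing to an ``obstacle'' — is the treatment of degenerate or boundary cases in the phrase ``on or to the left of $p$'': one must confirm that no vertex other than $v_1$ qualifies as being on or to the left of $v_1$ (and symmetrically for $v_n$). Under the standard assumption that the vertices of an $x$-monotone terrain satisfy $x_1 < x_2 < \cdots < x_n$ this is immediate; even allowing equal $x$-coordinates at the ends, a vertex sharing $v_1$'s $x$-coordinate is neither on $v_1$ nor strictly to its left, so the conclusion is unaffected. Everything else is a direct unwinding of the definitions, so no deeper machinery (e.g.\ Lemma~\ref{order lemma}) is needed for this observation.
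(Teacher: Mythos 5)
Your argument is correct and is essentially the same as the paper's one-line justification: only $v_1$ can serve as a left guard of $v_1$ and only $v_n$ as a right guard of $v_n$, so both must belong to any feasible (hence any optimal) guard set. Your additional remarks on degenerate $x$-coordinates and on the distinctness clause of Definition~1 are harmless elaborations of the same idea.
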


This is because in the CTGTG for right and left guarded $T$, only $v_1$ can left guard $v_1$ and only $v_n$ can right guard $v_n$.

\begin{figure}
\begin{center}\includegraphics[scale=0.8]{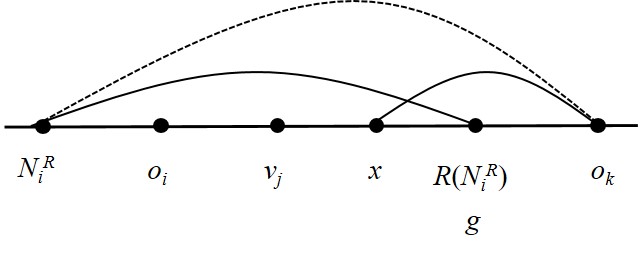}
\caption{Position of $R(N^R_i)$ and $g \in O_R(N^R_i)$.}
\label{fig3}
\end{center}
\end{figure}

\begin{lemma}
\label{between}
$R(N^R_i)$ and $ g \in O_R(N^R_i)$ do not lie on the right side of $o_i$.
\end{lemma}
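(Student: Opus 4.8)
The statement has two halves: (i) $g\preceq o_i$ for every $g\in O_R(N^R_i)$, and (ii) $R(N^R_i)\preceq o_i$.

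Half (i) is immediate from the definitions. Since $N^R_i$ is, by definition, not right guarded by $\{o_i,o_{i+1},\dots,o_m\}$, none of those guards right guards $N^R_i$; as $O_R(N^R_i)\subseteq OPT$, this forces $O_R(N^R_i)\subseteq\{o_1,\dots,o_{i-1}\}$, so any $g\in O_R(N^R_i)$ equals some $o_j$ with $j<i$ and hence $g\prec o_i$. Two byproducts will be used for half (ii): since a right guard lies on or to the right of the guarded point, $N^R_i\preceq g\prec o_i$, so $N^R_i\prec o_i$; and $o_i$ cannot see $N^R_i$, for otherwise $o_i$ itself — a guard of index $i$ lying to the right of $N^R_i$ — would right guard $N^R_i$, contradicting the choice of $N^R_i$.

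For half (ii) the plan is proof by contradiction: suppose $r:=R(N^R_i)\succ o_i$. First I would dispose of the easy case $r\in OPT$: writing $r=o_k$, from $r\succ o_i$ we get $k\ge i$, and since $r$ is a vertex that sees $N^R_i$ with $r\succeq N^R_i$, it would right guard $N^R_i$ using a guard of index $\ge i$ — again contradicting the choice of $N^R_i$. So assume $r\notin OPT$; then $r$, being a point of $T$, is two-sided guarded by $OPT$, and its right guard $o_b\in OPT$ satisfies $o_b\succ r\succ o_i$, so $o_b$ has index $\ge i+1$ and $o_b\succeq r\succeq N^R_i$. The goal is then to show that $o_b$ — or, depending on the configuration, the right guard of $o_i$ — actually sees $N^R_i$, since that would make $N^R_i$ right guarded by a guard of index $\ge i$, the contradiction we want. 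To establish such a visibility I would repeatedly invoke the order lemma (Lemma~\ref{order lemma}), together with Observations~\ref{obs_r} and~\ref{obs_l} to pass between points on an edge and its endpoints, chaining the known visibilities "$N^R_i$ sees $r$", "$o_b$ sees $r$", "$g$ sees $N^R_i$" (with $g\in O_R(N^R_i)$), and the visibility edge witnessing that $o_i$ is two-sided guarded by $OPT$.

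The step I expect to be the obstacle is precisely this chaining. On a 1.5D terrain the portion of the terrain visible from a point, restricted to one side, need not be contiguous, so "$N^R_i$ sees $r$" does not hand us "$N^R_i$ sees $o_i$" or any intermediate visibility for free; each intermediate visibility edge must be manufactured by an application of Lemma~\ref{order lemma} with the four involved points in exactly the right left-to-right order, and the argument seems to require splitting on whether $o_i$ is its own right guard (so that a genuine left guard $o_\ell\prec o_i$ of $o_i$ is available) or instead $o_i$ has a right guard $o_s\succ o_i$ in $OPT$. Degenerate positions — $N^R_i$ coinciding with a vertex that lies in $OPT$, or a vertical edge of $T$ making two points share an $x$-coordinate — also need separate, routine checking. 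Once some guard of index $\ge i$ is shown to see $N^R_i$, the contradiction is complete, and we conclude $R(N^R_i)\preceq o_i$ (in fact $R(N^R_i)\prec o_i$, since $o_i$ does not see $N^R_i$).
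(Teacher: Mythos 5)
Your half (i) and the sub-case $R(N^R_i)\in OPT$ of half (ii) are correct, and your observation that the claim about $g\in O_R(N^R_i)$ follows immediately from the definition of $N^R_i$ is a genuine simplification of what the paper does. The problem is the remaining sub-case, which is the heart of the lemma, and there you have only a plan, not a proof; the obstacle you flag yourself is real. Writing $r=R(N^R_i)$, the visibilities you propose to chain are ``$N^R_i$ sees $r$'' and ``$o_b$ sees $r$'' with $N^R_i\prec r\prec o_b$; that is the \emph{three}-point configuration ``$a$ sees $c$ and $c$ sees $d$,'' for which the conclusion ``$a$ sees $d$'' is false in general (take $c$ a peak with $a$ and $d$ in valleys on either side). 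Lemma~\ref{order lemma} needs a fourth point strictly between $a$ and $c$ that sees $d$, and nothing in your setup supplies one; bringing in the right guard of $o_i$ does not help, because the visibility edges it contributes are again not in the interleaved left-to-right order the lemma requires.

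The missing idea, which is essentially the entire content of the paper's proof, is to take an auxiliary point $x$ in the \emph{open} interior of the edge $\overline{v_j r}$ immediately to the left of the vertex $r$, chosen to the right of $N^R_i$. Since $OPT$ two-sided guards all of $T$, $x$ has a right guard $o_k\in OPT$, and because guards are vertices and no vertex lies strictly between $x$ and $r$, necessarily $o_k\succeq r$. If $o_k=r$, then $r\in OPT$ and you are back in your easy sub-case; otherwise $N^R_i\prec x\prec r\prec o_k$ with $N^R_i$ seeing $r$ and $x$ seeing $o_k$, so Lemma~\ref{order lemma} gives that $o_k$ sees $N^R_i$, and $o_k\succ r\succ o_i$ makes $o_k$ a right guard of $N^R_i$ with index greater than $i$, contradicting the definition of $N^R_i$. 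With that one device the case closes in a few lines; without it your chaining does not go through.
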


\begin{proof}
Assume $R(N^R_i)$($g \in O_R(N^R_i)$) is on the right side of $o_i$ and $x$ is on the edge $e_j =\overline{v_jR(N^R_i)}$($\overline{v_jg}$). We know that $x$ is right guarded by $o_k$ and $o_k$ is on the right side of $R(N^R_i)(g)$. According to Lemma \ref{order lemma}, if $o_k$ right guards $x$, then $N^R_i$ is right guarded by $o_k$. This contradicts the definition of $N^R_i$ and $o_k$ sees $N^R_i$. The schematic of Lemma~\ref{between} is given in Fig~\ref{fig3}.
\end{proof}

\begin{lemma}
\label{between_l}
$L(N^L_i)$ and $g \in O_L(N^L_i)$ do not lie the left side of $o_i$.
\end{lemma}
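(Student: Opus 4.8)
The plan is to run the left--right mirror of the proof of Lemma~\ref{between}. Reflecting the terrain across a vertical line interchanges $N^R_i$ with $N^L_i$, the rightmost visible vertex $R(\cdot)$ with the leftmost visible vertex $L(\cdot)$, the set $O_R(\cdot)$ with $O_L(\cdot)$, and Observations~\ref{obs_r} and~\ref{obs_l}, while Lemma~\ref{order lemma} is self-dual; so in principle one only has to transcribe that proof. The assertion about a guard $g\in O_L(N^L_i)$ is in fact immediate from the definition of $N^L_i$: if such a $g$ lay strictly to the left of $o_i$, then $g=o_k$ with $k\le i-1$, and since $g$ left guards $N^L_i$ this would make $N^L_i$ left guarded by $\{o_1,\dots,o_i\}$, a contradiction. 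So the substantive case is the claim $L(N^L_i)\succeq o_i$, where $L(N^L_i)$ need not belong to $OPT$.

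For that case I would argue by contradiction. Put $w:=L(N^L_i)$ and suppose $w\prec o_i$. Because $OPT$ two-sided guards $T$, the point $N^L_i$ has a left guard in $OPT$, i.e.\ a vertex on or to the left of $N^L_i$ that sees $N^L_i$; since $w$ is the leftmost vertex seeing $N^L_i$, this gives $w\preceq N^L_i$. I would dispose of the equality $w=N^L_i$ at once: it forces $N^L_i\in OPT$, and then the index argument of the previous paragraph already yields $N^L_i\succ o_i$, contradicting $w\prec o_i$. So assume $w\prec N^L_i$, let $e_j$ be the edge whose left endpoint is $w$, and pick a point $x$ in the relative interior of $e_j$ with $w\prec x\prec N^L_i$ (possible exactly because $w\prec N^L_i$). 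Since $x\prec N^L_i$ and $N^L_i$ is the leftmost point not left guarded by $\{o_1,\dots,o_i\}$, the point $x$ is left guarded by some $o_k$ with $k\le i$, so $o_k\preceq x$ and $o_k$ sees $x$. If $o_k\succeq w$ then $o_k$ is a vertex lying on $e_j$ between its endpoint $w$ and the interior point $x$, hence $o_k=w$; then $o_k\in\{o_1,\dots,o_i\}$ left guards $N^L_i$, contradicting the definition of $N^L_i$. Otherwise $o_k\prec w$, and applying Lemma~\ref{order lemma} to $o_k\prec w\prec x\prec N^L_i$ --- with $o_k$ seeing $x$ and $w$ seeing $N^L_i$ --- shows that $o_k$ sees $N^L_i$; but then $o_k$ is a vertex strictly to the left of $w=L(N^L_i)$ that sees $N^L_i$, contradicting the minimality of $L(N^L_i)$ (and also $o_k\in\{o_1,\dots,o_i\}$ left guards $N^L_i$, again contradicting the definition of $N^L_i$).

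The one point that needs care --- just as in Lemma~\ref{between} --- is arranging the mirrored quadruple $o_k\prec w\prec x\prec N^L_i$ in the correct order so that Lemma~\ref{order lemma} fires with $o_k$ playing the role of the left endpoint and $N^L_i$ the role of the right endpoint (in Lemma~\ref{between} these roles are taken by $N^R_i$ and $o_k$, respectively, the reflection having swapped them), together with verifying that the auxiliary interior point $x$ can always be inserted strictly between $w$ and $N^L_i$ and checking the two corner cases $o_k=w$ and $w=N^L_i$. I expect all of this to be routine bookkeeping rather than a genuine obstacle.
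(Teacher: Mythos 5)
Your proof is correct and is exactly the left--right mirror of the paper's proof of Lemma~\ref{between}, which is what the paper intends, since it states Lemma~\ref{between_l} without proof by appeal to symmetry. If anything, your write-up is more careful than the paper's own argument for Lemma~\ref{between} (e.g.\ in justifying that the guard $o_k$ cannot lie in the interior of the edge and in handling the corner cases $o_k=w$ and $w=N^L_i$).
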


\begin{lemma}
\label{can't guard}
If $R(N^R_i) \notin O_R(N^R_i)$ and 
$R(N^R_i) \prec x$,
then $g \in O_R(N^R_i)$ cannot left guard $x$.
\end{lemma}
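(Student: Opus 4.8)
We must show that if $R(N^R_i)\notin O_R(N^R_i)$ and $R(N^R_i)\prec x$, then no guard $g\in O_R(N^R_i)$ can left guard $x$. The plan is to argue by contradiction: suppose some $g\in O_R(N^R_i)$ left guards $x$, i.e. $g\preceq x$ and $g$ sees $x$. I would first record the ordering facts that come for free. By Lemma~\ref{between}, both $R(N^R_i)$ and every $g\in O_R(N^R_i)$ lie on or to the left of $o_i$; and $g$ right guards $N^R_i$, so $N^R_i\preceq g$. Combined with the hypothesis $R(N^R_i)\prec x$ and the assumption $g\preceq x$, this pins down the relative positions of the four relevant points $N^R_i$, $R(N^R_i)$, $g$, and $x$ on the terrain.

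\textbf{Main argument.} The key is to produce a contradiction with the definition of $R(N^R_i)$ as the \emph{rightmost} vertex seeing $N^R_i$. I expect the decisive step to be an application of the order lemma (Lemma~\ref{order lemma}) to a suitably chosen 4-tuple. Since $g$ right guards $N^R_i$ we have $g$ sees $N^R_i$; since $g$ left guards $x$ we have $g$ sees $x$, and $N^R_i\prec R(N^R_i)\prec x$ with $N^R_i \preceq g \preceq x$. If $g$ lies strictly between $R(N^R_i)$ and $x$, then the chain $N^R_i \prec R(N^R_i) \prec g \prec x$ together with "$R(N^R_i)$ sees $N^R_i$" is the wrong orientation to feed Lemma~\ref{order lemma} directly, so instead I would use the tuple $N^R_i \prec R(N^R_i) \prec g$, note $g$ sees $N^R_i$, and deduce that some vertex strictly to the right of $R(N^R_i)$ sees $N^R_i$ — contradicting maximality of $R(N^R_i)$ — provided $g \neq R(N^R_i)$ and $g$ is genuinely to the right of $R(N^R_i)$. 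The hypothesis $R(N^R_i)\notin O_R(N^R_i)$ is exactly what forbids $g=R(N^R_i)$, so $g$ is a \emph{different} element of $O_R(N^R_i)$, and since both sit to the left of $o_i$ and both see $N^R_i$, I must then rule out $g\prec R(N^R_i)$ using the assumption that $g$ still left-guards the point $x$ which lies to the right of $R(N^R_i)$.

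\textbf{Closing the case analysis.} So the real content is: (i) if $R(N^R_i)\prec g$, apply the order lemma to get a witness right of $R(N^R_i)$ seeing $N^R_i$, contradiction; (ii) if $g\prec R(N^R_i)$, then we have $g\prec R(N^R_i)\prec x$ with $g$ seeing $x$ and $R(N^R_i)$ seeing $N^R_i$ — but here I'd instead use that $R(N^R_i)$ sees $N^R_i$ and look for a chain allowing Lemma~\ref{order lemma} to conclude $g$ (or $R(N^R_i)$) sees something that contradicts the definition of $N^R_i$ itself, namely that $N^R_i$ is the rightmost point not right-guarded by $\{o_i,\dots,o_m\}$. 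In fact since every point strictly to the right of $N^R_i$ is right-guarded by some $o_k$ with $o_k$ to its right, and in particular $x$ is right-guarded by some such $o_k$, combining "$o_k$ right guards $x$" with "$g$ sees $N^R_i$" via the order lemma on $N^R_i\prec g\prec x\prec o_k$ (noting $g$ left guards $x$ so $g\preceq x\preceq o_k$) forces $N^R_i$ to be right-guarded by $o_k$, the same contradiction exploited in Lemma~\ref{between}.

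\textbf{Expected obstacle.} The main difficulty is bookkeeping the degenerate cases where points coincide (e.g. $g=x$, or $x$ is a vertex, or $g$ lies on the edge containing $R(N^R_i)$), since the order lemma is stated for four points with strict order $\prec$; I would handle these by the same edge-to-vertex reductions used in Observations~\ref{obs_l} and~\ref{obs_r}, replacing an interior point of an edge by an endpoint that inherits the guarding relation, so that the strict-order hypothesis of Lemma~\ref{order lemma} is met. Once the positional bookkeeping is in place, each subcase is a one-line application of Lemma~\ref{order lemma} contradicting either the maximality of $R(N^R_i)$ in $vis(N^R_i)$ or the defining property of $N^R_i$.
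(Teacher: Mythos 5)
Your overall strategy (force some $o_k\in\{o_i,\dots,o_m\}$ to right guard $N^R_i$, contradicting the definition of $N^R_i$) is the same as the paper's, and your preliminary ordering facts are right: every $g\in O_R(N^R_i)$ is a vertex of $vis(N^R_i)$, so $N^R_i\preceq g\preceq R(N^R_i)$ automatically, and the hypothesis $R(N^R_i)\notin O_R(N^R_i)$ gives $g\prec R(N^R_i)$. Hence your case (i) is vacuous and case (ii) is the whole lemma. The gap is in the decisive step of case (ii): you apply Lemma~\ref{order lemma} to the tuple $N^R_i\prec g\prec x\prec o_k$ using the facts ``$g$ sees $N^R_i$'' and ``$o_k$ sees $x$''. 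But with $a=N^R_i$, $b=g$, $c=x$, $d=o_k$ the lemma requires ``$a$ sees $c$'' and ``$b$ sees $d$'', i.e.\ ``$N^R_i$ sees $x$'' and ``$g$ sees $o_k$'', neither of which you have; what you have is ``$b$ sees $a$'' and ``$d$ sees $c$'', the wrong pattern. The implication you want is genuinely false from those facts alone: if $g$ is a tall peak it sees $N^R_i$ down to its left and $x$ down to its right while blocking $o_k$ from $N^R_i$.

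The missing idea---and the place where $g\prec R(N^R_i)$ actually earns its keep---is an intermediate application of Lemma~\ref{order lemma} pivoting on $R(N^R_i)$. From $N^R_i\prec g\prec R(N^R_i)\prec x$, ``$N^R_i$ sees $R(N^R_i)$'' and ``$g$ sees $x$'', the lemma yields that $N^R_i$ sees $x$. Only then can the right guard of $x$ be brought in: letting $o_k$ right guard $x$ and $v_k$ be the left endpoint of the edge containing $x$ (so $o_k$ sees $v_k$ by Observation~\ref{obs_r}), a second application of Lemma~\ref{order lemma} to $N^R_i\prec v_k\prec x\prec o_k$ gives that $o_k$ sees, hence right guards, $N^R_i$---the contradiction. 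This two-step chaining is exactly the paper's proof; without the first step your final deduction does not go through.
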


\begin{proof}
We prove Lemma~\ref{can't guard} in two steps. The first step explains that if $g \in O_R(N^R_i)$ left guards $x_k \in \{x_j \mid R(N^R_i) \prec x_j \}$, then $x_k$ and $N^R_i$ can see each other. The second step explains that if $x_k$ and $N^R_i$ can see each other, and $g$ left guards $x_k$, then $N^R_i$ is right guarded by $g$.
First, let $N^R_i \prec g \prec R(N^R_i) \prec x_k$. Because $R(N^R_i)$ sees $N^R_i$, according to Lemma~\ref{order lemma} if $x$ and $g$ see each other, then $x_k$ and $N^R_i$ also see each other. This is illustrated in Fig.~\ref{fig:4}.

\begin{figure}
\begin{center}
\includegraphics[scale=0.8]{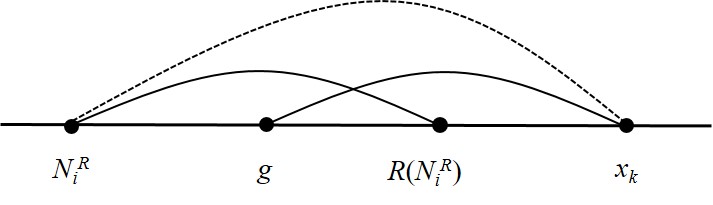}
\caption{If $O_R(N^R_i)$ left guard $x_k$, then $x_k$ and $N^R_i$ see each other.}
\label{fig:4}
\end{center}
\end{figure}

In the second step, we assume that $x_k$ is right guarded by $o_j$ and that $x_k$ is on the edge $e_k$ on the right side of $R(N^R_i)$. We know that if $O_R(N^R_i)$ left guards $x_k$, then $x_k$ and $N^R_i$ see each other. Because $o_j$ right guards $x_k$ and sees $v_l$, if $x$ sees $N^R_i$ then $o_j$ right guard $N^R_i$ too, as illustrated in Fig.~\ref{fig:5}.

\begin{figure}
\begin{center}
\includegraphics[scale=0.8]{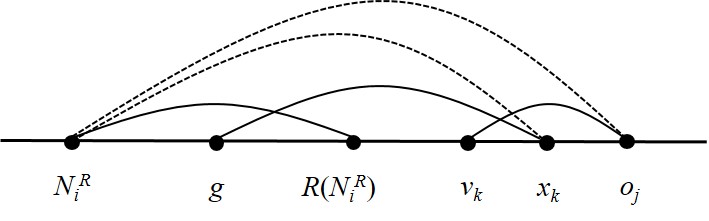}
\caption{If $x_k$ and $N^R_i$ see each other, then $o_j$ right guards $N^R_i$.}
\label{fig:5}
\end{center}
\end{figure}

\end{proof}

\begin{lemma}
\label{can't guard l}
If $L(N^L_i) \notin O_L(N^L_i)$ and $x \prec L(N^L_i)$, then $g \in O_L(N^L_i))$ cannot right guard $x$.
\end{lemma}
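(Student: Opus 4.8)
The plan is to prove Lemma~\ref{can't guard l} as the left--right mirror image of Lemma~\ref{can't guard}: reflecting the terrain about a vertical line swaps $N^R_i\leftrightarrow N^L_i$, $R(\cdot)\leftrightarrow L(\cdot)$, $O_R(\cdot)\leftrightarrow O_L(\cdot)$, ``right guards''$\leftrightarrow$``left guards'', and Observation~\ref{obs_r}$\leftrightarrow$Observation~\ref{obs_l}. Concretely I would argue by contradiction: suppose some $g\in O_L(N^L_i)$ does right guard a point $x$ with $x\prec L(N^L_i)$, and then derive a contradiction with the definition of $L(N^L_i)$ in two steps, exactly paralleling the proof of Lemma~\ref{can't guard}.

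The first step is to show that $x$ and $N^L_i$ see each other. Since $g\in O_L(N^L_i)$, the vertex $g$ sees $N^L_i$ and lies on or to the left of it; since $L(N^L_i)$ is the leftmost vertex of $vis(N^L_i)$ while $L(N^L_i)\notin O_L(N^L_i)\ni g$, this forces the ordering $x\prec L(N^L_i)\prec g\preceq N^L_i$ (if $g=N^L_i$ then $g$ right guarding $x$ already gives that $N^L_i$ sees $x$, and we skip ahead). Because $g$ right guards $x$ it sees $x$, and $L(N^L_i)$ sees $N^L_i$ by definition, so applying Lemma~\ref{order lemma} to the four points $x\prec L(N^L_i)\prec g\prec N^L_i$ yields that $x$ sees $N^L_i$.

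The second step is to exploit an optimal left guard of $x$. Since $OPT$ two-sided guards every point of $T$, the point $x$ is left guarded by some $o_j\in OPT\subseteq V(T)$ with $o_j\preceq x$; writing $e_k=\overline{v_kv_{k+1}}$ for an edge containing $x$, Observation~\ref{obs_l} gives that $o_j$ also left guards $v_{k+1}$, hence $o_j$ sees $v_{k+1}$, and since $o_j$ is a vertex it satisfies $o_j\preceq v_k\prec x\prec v_{k+1}$. A short case check shows $v_{k+1}\preceq N^L_i$: if $N^L_i$ lay strictly inside $e_k$ then $v_k$ would see $N^L_i$, forcing $L(N^L_i)\preceq v_k\prec x$ and contradicting $x\prec L(N^L_i)$. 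Thus either $o_j$ already sees $v_{k+1}=N^L_i$, or $v_{k+1}\prec N^L_i$ and Lemma~\ref{order lemma} applied to $o_j\prec x\prec v_{k+1}\prec N^L_i$ (using ``$o_j$ sees $v_{k+1}$'' and ``$x$ sees $N^L_i$'') shows $o_j$ sees $N^L_i$. In either case $o_j$ is a vertex of $vis(N^L_i)$ with $o_j\prec x\prec L(N^L_i)$, contradicting that $L(N^L_i)$ is the leftmost vertex of $vis(N^L_i)$.

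Since the whole argument is a mechanical mirror of Lemma~\ref{can't guard}, I do not anticipate a genuine obstacle. The only point requiring care is bookkeeping the strict left-to-right order of the four points fed into Lemma~\ref{order lemma}; this is exactly why the small case analysis ruling out $N^L_i\in e_k$ (and the degenerate coincidences $g=N^L_i$ and $o_j=v_k$) is included. One could also invoke Lemma~\ref{between_l} to place $g$ relative to $o_i$, but that is not needed for the contradiction above.
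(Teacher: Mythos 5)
Your proof is correct and matches the paper's intent: the paper states Lemma~\ref{can't guard l} without proof as the left--right mirror of Lemma~\ref{can't guard}, and your argument is exactly that mirrored two-step proof (first showing $x$ sees $N^L_i$ via Lemma~\ref{order lemma}, then using an optimal left guard of $x$ to contradict the extremality of $L(N^L_i)$). If anything, your version is more carefully argued than the paper's own proof of the unmirrored lemma, since you explicitly handle the degenerate orderings needed to apply Lemma~\ref{order lemma}.
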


\begin{lemma}
\label{can't guard_1}
If $R(N^R_i) \notin O_R(N^R_i)$, $x$ is right guarded by $o_j$ and $i \leq j \leq m$, then $x$ cannot lie between $g \in O_R(N^R_i)$ and $R(N^R_i)$.
\end{lemma}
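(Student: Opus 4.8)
The plan is to argue by contradiction, combining Lemma~\ref{between} with Lemma~\ref{can't guard}. Suppose for contradiction that there is a point $x$ lying strictly between some $g\in O_R(N^R_i)$ and $R(N^R_i)$, i.e. $g \prec x \prec R(N^R_i)$, and that $x$ is right guarded by $o_j$ with $i\le j\le m$. By hypothesis $R(N^R_i)\notin O_R(N^R_i)$, so the setup of Lemma~\ref{can't guard} is in force. First I would locate $x$ relative to $N^R_i$: since $g\in O_R(N^R_i)$ right guards $N^R_i$, we have $N^R_i \prec g$, and by Lemma~\ref{between} both $g$ and $R(N^R_i)$ lie on or to the left of $o_i$; hence the whole configuration $N^R_i \prec g \prec x \prec R(N^R_i)$ sits to the left of (or at) $o_i$.

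The key step is to show $x$ is right guarded by $g$ itself, which contradicts Lemma~\ref{can't guard} (taking the roles so that $g$ would then be both a right guard of $N^R_i$ and a guard seeing a point to the right of $R(N^R_i)$ from the left — or more directly, contradicts the fact that $g$ can only see points in a restricted window). Concretely: $x$ is right guarded by $o_j$, so $x\prec o_j$ and $o_j$ sees $x$. Two subcases arise. If $o_j \preceq R(N^R_i)$, then since $N^R_i\prec g\prec x\prec o_j$ and $g$ sees $N^R_i$ while... — here I would instead apply Lemma~\ref{order lemma} to the four points $g \prec x \prec R(N^R_i) \prec o_k$ (where $o_k\in O_R(N^R_i)$ is a right guard of $N^R_i$ lying to the right, if one exists) to derive that $g$ sees $o_k$ or that $R(N^R_i)$'s visibility extends, eventually forcing $N^R_i$ to be right guarded by some $o_\ell$ with $\ell\ge i$, contradicting the definition of $N^R_i$. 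The cleanest route is: because $o_j$ right guards $x$ and $j\ge i$, and $R(N^R_i)$ is by definition the rightmost vertex seeing $N^R_i$, the point $o_j$ cannot see $N^R_i$; but applying Lemma~\ref{order lemma} to $N^R_i \prec x \prec R(N^R_i)$ together with the visibilities "$x$ sees $o_j$" and "$R(N^R_i)$ sees $N^R_i$" (suitably oriented using Observations~\ref{obs_l} and \ref{obs_r} to pass between $x$ and the endpoints of its edge) yields that $N^R_i$ sees $o_j$, the desired contradiction.

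I expect the main obstacle to be the bookkeeping that lets us treat the non-vertex point $x$ as if it were a vertex when invoking Lemma~\ref{order lemma}: Lemma~\ref{order lemma} is stated for points, so this is mostly a matter of checking that $x\notin\{v_k,v_{k+1}\}$ can be assumed (or handled by Observations~\ref{obs_l}/\ref{obs_r}), and that the strict ordering $g\prec x\prec R(N^R_i)$ does not degenerate. A secondary subtlety is ensuring the argument covers the case $R(N^R_i)\prec o_j$ as well as $o_j\preceq R(N^R_i)$; in the former case one should instead cite Lemma~\ref{can't guard} directly, since then $g\in O_R(N^R_i)$ would be left guarding a point $x$ with $R(N^R_i)\prec$ ... — so the final write-up should split on the position of $o_j$ versus $R(N^R_i)$ and dispatch one branch to Lemma~\ref{can't guard} and the other to Lemma~\ref{order lemma} plus the definition of $N^R_i$.
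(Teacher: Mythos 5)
Your ``cleanest route'' is exactly the paper's proof: apply Lemma~\ref{order lemma} to $N^R_i \prec x \prec R(N^R_i) \prec o_j$ (the paper uses Observation~\ref{obs_r} to pass from $x$ to the endpoint of its edge) and conclude that $o_j$ right guards $N^R_i$, contradicting the definition of $N^R_i$. The subcase $o_j \preceq R(N^R_i)$ that you leave unresolved is vacuous: Lemma~\ref{between} places $R(N^R_i)$ at or to the left of $o_i \preceq o_j$, and $R(N^R_i) = o_j$ would force $R(N^R_i) \in O_R(N^R_i)$, contrary to hypothesis.
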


\begin{proof}
We assume $x$ is on the $e_k = \overline{v_kR(N^R_i)}$ and $x$ is right guarded by $o_j$. We know that $o_j$ right guards $v_k$ by Observation 2. According to Lemma~\ref{order lemma}, if $x$ is right guarded by $o_j$ then $N_R(o_i)$ is right guarded by $o_j$. Therefore, we know that if $R(N^R_i) \notin O_R(N^R_i)$, then $x$ cannot lie between $g \in O_R(N^R_i)$ and $R(N_R(o_i))$.

\end{proof}

\begin{lemma}
\label{can't guard l_1}
If $L(N_L(o_i)) \notin O_L(N_L(o_i))$, $x$ is left guarded by $o_j$ and and $1 \leq j \leq i$, then $x$ cannot lie between $g \in O_R(N^R_i)$ and $R(N^R_i)$.
\end{lemma}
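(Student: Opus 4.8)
This is the left--right mirror of Lemma~\ref{can't guard_1} (and the conclusion should read ``$x$ cannot lie between $g\in O_L(N^L_i)$ and $L(N^L_i)$'', to match the hypotheses). I would prove it by contradiction, following the symmetric argument: suppose $x$ lies strictly between some $g\in O_L(N^L_i)$ and $L(N^L_i)$, and that $x$ is left guarded by $o_j$ with $1\le j\le i$; the goal is to show that $o_j$ then left guards $N^L_i$, contradicting the definition of $N^L_i$ as the leftmost terrain point not left guarded by $\{o_1,\dots,o_i\}$.

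The first step is to pin down the left-to-right order of the points involved. Since $g\in O_L(N^L_i)$, the guard $g$ left guards $N^L_i$, so $g\preceq N^L_i$; and since $L(N^L_i)$ is the leftmost vertex seeing $N^L_i$ while $g$ sees $N^L_i$, we get $L(N^L_i)\preceq g$, and in fact $L(N^L_i)\prec g$ because $L(N^L_i)\notin O_L(N^L_i)$. Thus ``between $g$ and $L(N^L_i)$'' is the genuine open interval $L(N^L_i)\prec x\prec g\preceq N^L_i$. By Lemma~\ref{between_l}, $L(N^L_i)$ does not lie to the left of $o_i$; moreover $o_i\neq L(N^L_i)$, since otherwise $o_i$ would see $N^L_i$ with $o_i\preceq N^L_i$, i.e.\ $o_i$ would left guard $N^L_i$, contradicting the choice of $N^L_i$. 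Combining with $o_j\preceq o_i$ (as $j\le i$), we obtain the chain of strict inequalities $o_j\prec L(N^L_i)\prec x\prec N^L_i$ among four distinct points.

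Now I would apply the order lemma (Lemma~\ref{order lemma}) to the quadruple $a=o_j\prec b=L(N^L_i)\prec c=x\prec d=N^L_i$: the guard $o_j$ sees $x$ because it left guards $x$, and $L(N^L_i)$ sees $N^L_i$ by definition of $L(\cdot)$, so the lemma gives that $o_j$ sees $N^L_i$; since $o_j\prec N^L_i$, this means $o_j$ left guards $N^L_i$, the desired contradiction. (As in Lemma~\ref{can't guard_1}, if $x$ lies in the interior of an edge $e_k$ one may first pass to the vertex $v_{k+1}$ via Observation~\ref{obs_l}, but the direct argument already suffices.) The only part requiring care is precisely the bookkeeping of the second step --- ruling out the coincidences $o_i=L(N^L_i)$, $L(N^L_i)=x$, and $x=N^L_i$ so that the order lemma applies to four genuinely distinct points in the right order; each is excluded using, respectively, the definition of $N^L_i$, the open-interval meaning of ``between'', and the strict inequality $x\prec g\preceq N^L_i$. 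After that, the order lemma closes the argument immediately.
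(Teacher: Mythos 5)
Your proof is correct and matches the paper's intended argument: the paper states this lemma without proof as the left--right mirror of Lemma~\ref{can't guard_1}, whose own proof rests on exactly the same ingredients (the order lemma plus a contradiction with the definition of the frontier point $N^L_i$), and you rightly observe that the printed conclusion contains a typo and should refer to $O_L(N^L_i)$ and $L(N^L_i)$. If anything, your version is more careful than the paper's sketch of the right-side analogue, since you explicitly justify the strict ordering $o_j\prec L(N^L_i)\prec x\prec N^L_i$ (via Lemma~\ref{between_l} and the non-coincidence checks) before invoking Lemma~\ref{order lemma}.
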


\begin{theorem}
If $R(N^R_i) \notin O_R(N^R_i)$, then $R(N^R_i)$ can replace $g \in O_R(N^R_i)$.
\end{theorem}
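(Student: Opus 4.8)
The plan is to show that swapping $g$ for $R(N_i^R)$ in the optimum again yields an optimum; that is, that $OPT' := (OPT \setminus \{g\}) \cup \{R(N_i^R)\}$ two-sided guards the witness set $X$ and satisfies $|OPT'| \le |OPT|$. For the size bound I would first note that $R(N_i^R) \notin OPT$: since $R(N_i^R)$ is the rightmost vertex that sees $N_i^R$ and every $g \in O_R(N_i^R)$ satisfies $N_i^R \preceq g \preceq R(N_i^R)$, we have $N_i^R \preceq R(N_i^R)$, so if $R(N_i^R)$ were in $OPT$ it would be a right guard of $N_i^R$, contradicting the hypothesis. Combined with Lemma~\ref{between}, this yields the ordering $N_i^R \preceq g \preceq R(N_i^R) \preceq o_i$, and since $o_i$ itself does not right guard $N_i^R$ we may write $g = o_l$ with $l < i$. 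These location facts drive the rest of the argument.

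For feasibility, the only witness points whose two-sided guarding can be harmed by deleting $g$ are those for which $g$ is the unique right guard, or the unique left guard, in $OPT$. Suppose $g$ is the unique right guard of a witness point $x$, so $x \preceq g$. Then $x \preceq N_i^R$: otherwise $x$ lies to the right of $N_i^R$, hence by the definition of $N_i^R$ it is right guarded by some $o_k$ with $k \ge i$, and since $g = o_l$ with $l < i$ this $o_k$ is a right guard of $x$ distinct from $g$, contradicting uniqueness. Now $x \preceq N_i^R \preceq g \preceq R(N_i^R)$, with $g$ seeing $x$ and $R(N_i^R)$ seeing $N_i^R$, so Lemma~\ref{order lemma} gives that $R(N_i^R)$ sees $x$; as $x \preceq R(N_i^R)$, the new guard $R(N_i^R)$ right guards $x$, and $x$ stays two-sided guarded in $OPT'$.

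The remaining, and I expect hardest, case is a witness point $x$ whose only left guard in $OPT$ is $g$; then $g \preceq x$, Lemma~\ref{can't guard} forces $x \preceq R(N_i^R)$, and Lemma~\ref{can't guard_1} shows that $x$ is not right guarded by any $o_j$ with $j \ge i$, so it retains a right guard (some $o_j$ with $j < i$) in $OPT'$. The obstacle is to supply a replacement \emph{left} guard for $x$, since the inserted vertex $R(N_i^R)$ lies to the right of $x$ and cannot serve. I expect this to be handled by a further application of Lemma~\ref{order lemma} to a quadruple built from $N_i^R$, $g$, $x$ and either $R(N_i^R)$ or the right guard $o_j$ (using Lemma~\ref{between_l} symmetrically), so as to prove that $OPT$ already contains a second guard to the left of $x$ that sees $x$ --- informally, that the subchain between $g$ and $R(N_i^R)$ is over-covered from the left --- whence removing $g$ is harmless. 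Once that is in place, the degenerate coincidences ($x = g$ or $x = R(N_i^R)$) are immediate, and $OPT'$ is a feasible guard set of the same cardinality as $OPT$, hence optimal, which is exactly what it means for $R(N_i^R)$ to replace $g$.
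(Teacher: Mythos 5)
Your overall strategy is the same as the paper's: an exchange argument that swaps $g$ for $R(N_i^R)$, using Lemma~\ref{between} to locate $g$ and $R(N_i^R)$ left of $o_i$, Lemma~\ref{order lemma} to show $R(N_i^R)$ inherits everything $g$ right-guards, and Lemmas~\ref{can't guard} and~\ref{can't guard_1} to control what $g$ can do to its right. Your treatment of the right-guard case (witness points $x\preceq N_i^R$, transferred to $R(N_i^R)$ via the order claim) is correct and in fact more carefully argued than the paper's one-line appeal to $vis(R(N_i^R))\supseteq vis(g)$.

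However, the proposal has a genuine gap, and you name it yourself: a witness point $x$ with $g \prec x \preceq R(N_i^R)$ whose \emph{only} left guard in $OPT$ is $g$. Lemma~\ref{can't guard} only rules out $g$ left-guarding points strictly to the right of $R(N_i^R)$, and Lemma~\ref{can't guard_1} is a statement about which $o_j$ can \emph{right}-guard points in the interval between $g$ and $R(N_i^R)$ --- it says nothing about whether $g$ left-guards them. Your closing sentence (``I expect this to be handled by a further application of Lemma~\ref{order lemma} \dots so as to prove that $OPT$ already contains a second guard to the left of $x$'') is a conjecture, not an argument: visibility on a terrain is not transitive, so no direct application of the order claim to $N_i^R$, $g$, $x$, $R(N_i^R)$ produces the needed second left guard, and you would have to either exhibit one or show such an $x$ cannot exist. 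Until that case is closed, feasibility of $OPT'$ is unproven. (For what it is worth, this is exactly the point at which the paper's own proof is weakest: it asserts that $g$ cannot left-guard any $x_k$ with $N_i^R \prec x_k$, but the cited lemmas only establish this for $x_k$ to the right of $R(N_i^R)$. You have correctly isolated the load-bearing claim; you just have not proved it.)
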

\begin{proof}
Based on Lemma~\ref{between}, Lemma~\ref{can't guard} and Lemma~\ref{can't guard_1},
if $g \in O_R(N^R_i)$ and $R(N^R_i) \notin O_R(N^R_i)$, then $g$ cannot left guard $x_k \in \{ x_j  \mid N^R_i \prec x_j \} $. Due to $g \prec R(N^R_i))$, we know $vis(R(N^R_i)) \supseteq vis(g)$ by Lemma~\ref{order lemma}.  
\end{proof}

\begin{theorem}
If $O_L(N^L_i) \notin L(N^L_i)$, then $L(N^L_i)$ can replace $g \in O_L(N^L_i)$.
\end{theorem}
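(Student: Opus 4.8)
\medskip
\noindent\textbf{Proof plan.}
This theorem is the left--right mirror of the previous one, so the plan is to replay that argument with every ``right'' ingredient replaced by its ``left'' counterpart; in particular I read the hypothesis as $L(N^L_i)\notin O_L(N^L_i)$, and I replace Lemma~\ref{between}, Lemma~\ref{can't guard} and Lemma~\ref{can't guard_1} by Lemma~\ref{between_l}, Lemma~\ref{can't guard l} and Lemma~\ref{can't guard l_1}, respectively. First I would pin down the relative order of the points involved: since $g\in O_L(N^L_i)$ left guards $N^L_i$ we have $g\prec N^L_i$; since $L(N^L_i)$ is the leftmost vertex of $vis(N^L_i)$ we have $L(N^L_i)\preceq g$, and the hypothesis forces this to be strict; by Lemma~\ref{between_l} neither $L(N^L_i)$ nor $g$ lies to the left of $o_i$; and since $N^L_i$ is not left guarded by $\{o_1,\dots,o_i\}$, the guard $g$ must be one of $o_{i+1},\dots,o_m$.

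The core step, mirroring the combination of Lemma~\ref{between_l}, Lemma~\ref{can't guard l} and Lemma~\ref{can't guard l_1}, is to show that $g$ cannot right guard any witness point $x_k$ with $x_k\prec N^L_i$. Since a right guard of a point lies on or to the right of that point, $g$ right guards only points on or to its left, and all of those are $\prec N^L_i$; combined with the core step this means $g$ right guards no point of $X$ at all, so deleting $g$ from $OPT$ destroys no right-guarding. It then remains to restore the left-guarding that $g$ supplied. For a witness point $y$ left guarded by $g$ with $N^L_i\preceq y$, I would use the order $L(N^L_i)\prec g\prec N^L_i\preceq y$, the fact that $L(N^L_i)$ sees $N^L_i$ and the fact that $g$ sees $y$, and invoke Lemma~\ref{order lemma} to conclude that $L(N^L_i)$ sees, hence left guards, $y$; this is the precise sense in which $vis(L(N^L_i))\supseteq vis(g)$ over the relevant range. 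For a witness point $y$ left guarded by $g$ with $g\preceq y\prec N^L_i$, the point $y$ lies strictly to the left of $N^L_i$ and is therefore already left guarded by some $o_j$ with $j\le i$; since $g\in\{o_{i+1},\dots,o_m\}$ we have $o_j\neq g$, so $y$ keeps a left guard after $g$ is removed.

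Putting the two cases together, $(OPT\setminus\{g\})\cup\{L(N^L_i)\}$ is still a two-sided guard set of $X$ of cardinality at most $|OPT|$, hence again optimal, which is exactly the statement that $L(N^L_i)$ can replace $g$. The step I expect to be the main obstacle is the core step: showing $g$ right guards nothing to the left of $N^L_i$ requires chaining the three ``left'' lemmas together carefully -- first transporting visibility to $N^L_i$ via Lemma~\ref{order lemma}, then contradicting the definition of $N^L_i$ -- exactly paralleling the two-step argument inside the proof of Lemma~\ref{can't guard}. By comparison, the remaining bookkeeping, namely which surviving guard $o_j$ covers the witness points strictly between $g$ and $N^L_i$ and the final cardinality count, is routine.
\medskip
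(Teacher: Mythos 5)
Your proposal is correct and follows exactly the route the paper intends: the paper gives no proof for this theorem at all, leaving it as the left--right mirror of the preceding one, and your argument is precisely that mirror (correctly reading the garbled hypothesis as $L(N^L_i)\notin O_L(N^L_i)$ and swapping Lemmas~\ref{between}, \ref{can't guard}, \ref{can't guard_1} for their left counterparts). In fact your write-up is more careful than the paper's own proof of the right-hand version, since you explicitly handle the witness points between $g$ and $N^L_i$ and the final cardinality count, which the paper leaves implicit.
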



\section{Complexity}
Because our approach has two phases, we must first discuss the complexity of discretization. We obtain boundary points for a vertex $v$ on $E(T)$ in $O(n)$ by \cite{map}. Therefore, we compute all boundary points for each vertex of $V(T)$ on each edge $e \in E(T)$ in $O(n^2)$. We obtain at most 2$|V(T)|$ boundary points in $O(n^2)$.

\begin{figure}
\begin{center}
\includegraphics[scale=0.8]{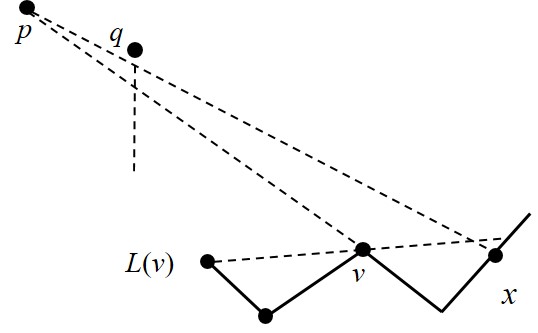}
\caption{If $L(v)$ cannot see $x$ and $v$ sees $x$ then $v=L(x)$.}
\label{fig:8}
\end{center}
\end{figure}

Next, we demonstrate how to compute an optimal solution for the CTGTG. In step 1, we add $v_1$ and $v_n$ to our solution. In step 2, we compute the $vis(v_1)$ and $vis(v_n)$. In step 3, we add $R(x)$ to our solution, where $x$ is the nonright-guarded rightmost point. If a point $x$ exists that is not right guarded, then repeat step 3 until $X$ is right guarded. In step 4, we add $L(x)$ to our solution, where $x$ is the non-left guarded leftmost point. If a point $x$ exists that is not left guarded, then repeat step 4 until $X$ is right guarded. Finally, all points $x$ are right guarded and left guarded.



We show our algorithm for the CTG problem runs in $O(n)$ using two steps. Before the algorithm begins, we can compute $R(x)$ and $L(x)$ for each point of $X$ in $O(n)$. After this computation, we proceed to the algorithm in $O(n)$. Therefore, our proposed algorithm for the CTG problem runs in $O(n)$.


\begin{algorithm}[h]

  \SetAlgoNoEnd
  \caption{Compute all $L(x)$}
  \label{Compute all $L(x_i)$}
  \KwIn{$T$: terrain, $X$: point set}
  \KwOut{\{ $L(x_i)|x_i \in X$ \} }
  
  $Q \leftarrow X \cup V(T)$
  
  \For{$q_i \in Q$ processed from left to right } {
    $q_j = q_{i-1}$
    
    \While{$L(q_i) = \emptyset$}{
    \If{$q_i$ sees $L(q_j)$}{
    \If{$L(q_j)$ is not $v_1$}{
    $q_j=L(q_j)$
    }
    \Else{
    $L(q_i)=v_1$
    }
    }
    \Else{
    $L(q_i)=q_j$
    }
    }
  }
  \For{$x_i \in X$ processed from left to right } {
  Return $L(x_i)$
  }
\end{algorithm}

\begin{lemma}
\label{L(xi)}
If $L(v)$ cannot see $x$ and $v$ sees $x$ then $v=L(x)$. 
\end{lemma}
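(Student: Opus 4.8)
The claim is that if $L(v)$ (the leftmost vertex visible from $v$) cannot see $x$, yet $v$ itself sees $x$, then $v$ is exactly $L(x)$, the leftmost vertex that sees $x$. The plan is to argue by contradiction: suppose $v \neq L(x)$. Since $v$ sees $x$ and $v$ is not the leftmost such vertex, there must be some vertex $w \in vis(x)$ with $w \prec v$. I would then split on whether $x$ lies to the left or right of $v$, but the interesting and intended configuration (given the surrounding discussion of computing $L(x)$ left-to-right) is when $v \prec x$, so I will focus there; the case $x \prec v$ either is symmetric or is ruled out by how $L(v)$ is positioned relative to $v$.

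So assume $w \prec v \prec x$ with $w$ seeing $x$. By definition $L(v) \preceq w$ (actually $L(v) \preceq w \prec v$), so in particular $L(v) \prec v \prec x$. Now I want to invoke the order lemma (Lemma~\ref{order lemma}) to derive that $L(v)$ sees $x$, contradicting the hypothesis. To apply Lemma~\ref{order lemma} with the four points $a \prec b \prec c \prec d$, I need: $a$ sees $c$ and $b$ sees $d$ imply $a$ sees $d$. Take $a = L(v)$, $b = w$, $d = x$; I need a fourth point $c$ with $b \prec c \prec d$ such that $a = L(v)$ sees $c$ and $b = w$ sees $d = x$. The second visibility, $w$ sees $x$, is exactly what we assumed. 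For the first, note $L(v)$ sees $v$ (by definition of $L(v)$ as a vertex in $vis(v)$), and $v$ satisfies $w \prec v \prec x$, so taking $c = v$ gives $a \prec b \prec c \prec d$ precisely as $L(v) \prec w \prec v \prec x$ — wait, I need $b \prec c$, i.e.\ $w \prec v$, which holds. Then Lemma~\ref{order lemma} yields $L(v)$ sees $x$, the desired contradiction. Hence no such $w$ exists, so $v$ is the leftmost vertex in $vis(x)$, i.e.\ $v = L(x)$.

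The main obstacle I anticipate is handling the ordering of $w$ relative to $v$ and $L(v)$ cleanly, and making sure the chosen four points for Lemma~\ref{order lemma} are in strict left-to-right order (the order lemma as stated needs $a \prec b \prec c \prec d$). In particular one must rule out the degenerate possibilities $w = v$, $w = L(v)$, or $v = L(v)$: if $v = L(v)$ then $v$ seeing $x$ already contradicts ``$L(v)$ cannot see $x$,'' so that case is immediate; if $L(v) = w$ we can instead pick $b$ to be any vertex strictly between $w$ and $v$ in $vis(x)$, or argue directly that $L(v) = w$ seeing $x$ is the contradiction. A secondary subtlety is the case $x \prec v$: here $v$ seeing $x$ and the definition of $L(v)$ being the leftmost vertex seeing $v$ do not directly interact with $x$, so I would either note that the algorithm only ever queries this lemma in the $v \prec x$ regime, or observe that if $x \prec v$ then $L(x)$ must lie weakly left of $x$ and weakly left of $v$, and a short application of the order lemma (now with $v$ on the right) again forces $L(v)$ to see $x$. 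Once these boundary cases are dispatched, the core argument is the single invocation of Lemma~\ref{order lemma} sketched above.
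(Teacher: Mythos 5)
There is a genuine gap, and it sits exactly where the real work of the lemma is. Your entire argument funnels through the assertion ``by definition $L(v)\preceq w$,'' but this does not follow from anything: $L(v)$ is the leftmost vertex that sees $v$, while $w$ was only chosen to see $x$. Two points $w\prec v\prec x$ can both see $x$ while a peak strictly between $w$ and $v$ blocks $\overline{wv}$ (the peak can sit above $\overline{wv}$ yet below $\overline{wx}$, since $\overline{wv}$ lies on or below $\overline{wx}$ over $[w,v]$). In that configuration $w\notin vis(v)$, so $L(v)$ may lie strictly to the right of $w$, i.e.\ $w\prec L(v)\prec v\prec x$. Your instantiation of Lemma~\ref{order lemma} with $a=L(v)$, $b=w$, $c=v$, $d=x$ then has the points out of order and cannot be applied; the subcase $L(v)\prec w\prec v$ that you do handle is the easy half, and the subcase $w\prec L(v)$ is silently discarded by the false inequality.

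The missing case is precisely the one the paper's own proof concentrates on: it assumes the competing vertex ($p$ in the paper's notation) lies to the left of $L(v)$ and argues that, since $p$ does not see $v$, there is a blocking vertex $q$ whose position contradicts $L(v)$ being the leftmost vertex of $vis(v)$. A complete repair of your argument in the case $w\prec L(v)$ runs roughly as follows: because $w$ does not see $v$, take $q$ to be the rightmost vertex strictly above $\overline{wv}$ with $w\prec q\prec v$; then $q$ sees $v$, so $L(v)\preceq q$. If moreover $w$ saw $x$, then $q$ lies on or below $\overline{wx}$ while strictly above $\overline{wv}$, which forces $x$ to lie strictly above the line through $q$ and $v$; combined with $q$ seeing $v$ and $v$ seeing $x$ this yields that $q$ sees $x$, and then either $q=L(v)$ (immediate contradiction) or Lemma~\ref{order lemma} applied to $L(v)\prec q\prec v\prec x$ gives that $L(v)$ sees $x$, again a contradiction. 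Without some argument of this kind (or the paper's blocking-vertex argument), the proof is incomplete. The brief dismissal of the case $x\prec v$ is a lesser issue, but it too deserves an explicit sentence restricting the lemma to $v\prec x$, which is how it is used in Algorithm~\ref{Compute all $L(x_i)$}.
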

\begin{proof}
Assume $p \prec L(v)\prec v \prec x$, $L(v)$ cannot see $x$ and $v$ can see $x$. If $p$ sees $x$ and cannot see $v$, then a vertex $q$ exists and lie above line $\overline{vL(v)}$ and $p \prec q \prec L(v)$, as illustrated in Fig.~\ref{fig:8}. However, the assumption that $L(v) \neq q$ is contradictory.   
\end{proof}



We propose Algorithm~\ref{Compute all $L(x_i)$} to compute $L(x)$ for all points of $X$ in $O(n)$ according to Lemma~\ref{L(xi)} and Lemma~\ref{order lemma}. We unite $X$ and $V(T)$ in a set $Q$. Algorithm~\ref{Compute all $L(x_i)$} finds $L(q_i) \in Q$ from left to right.
We prove that the running time of Algorithm~\ref{Compute all $L(x_i)$} is $O(n)$.

\begin{theorem}
Algorithm~\ref{Compute all $L(x_i)$} runs in O$(n)$.
\end{theorem}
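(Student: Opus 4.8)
The plan is to bound the total work of Algorithm~\ref{Compute all $L(x_i)$} by a standard amortized ``charging'' argument over the set $Q = X \cup V(T)$. First I would establish the size bound: by Lemma~\ref{x lemma} and the remark following it, $|X| \le 2(n-1)$, and $|V(T)| = n$, so $|Q| = O(n)$. Each iteration of the outer \textbf{for} loop that does not enter the inner \textbf{while} loop at all (or enters it and immediately terminates) costs $O(1)$, so the outer loop contributes $O(n)$ plus the total cost of all inner \textbf{while} iterations across all $q_i$. It therefore suffices to show that the number of inner \textbf{while} iterations, summed over all $q_i \in Q$, is $O(n)$.

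The key observation is that the pointer $q_j$ moves monotonically leftward through the vertices of $T$ during the inner loop: it starts at $q_{i-1}$ and each iteration either sets $q_j := L(q_j)$ (a strict leftward jump, since $L(q_j) \prec q_j$ by Lemma~\ref{order lemma} applied as in Lemma~\ref{between}) or terminates the loop. So the number of inner iterations for a fixed $q_i$ is at most the number of times the ``leftmost-visible'' pointer can be advanced starting from $q_{i-1}$. The crux is to argue that this pointer, viewed globally across successive $q_i$'s, never needs to be ``reset'' far to the right --- i.e., that the leftmost vertex visible from $q_i$ is not much to the left of the leftmost vertex visible from $q_{i-1}$. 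Concretely, I would prove the monotonicity claim $L(q_{i-1}) \preceq L(q_i)$ for consecutive elements of $Q$ using the order lemma (Lemma~\ref{order lemma}): if some vertex $w \prec L(q_i)$ could see $q_i$ but $L(q_{i-1})$ could not see $q_{i-1}$ with $w \preceq L(q_{i-1})$, then since $w$ sees $q_i$ and (the vertex immediately to the right establishing $L(q_{i-1})$) sees $q_{i-1} \prec q_i$, Lemma~\ref{order lemma} forces a visibility that contradicts the definition of $L(q_{i-1})$, exactly in the style of Lemma~\ref{L(xi)}. Hence $L$ is nondecreasing along $Q$.

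Given monotonicity of $L$ along $Q$, the amortization is immediate: when processing $q_i$ the pointer $q_j$ sweeps from $q_{i-1}$ leftward, but it can only pass over vertices lying in the range $[\,L(q_i),\, q_{i-1}\,]$, and by monotonicity the left endpoints $L(q_1) \preceq L(q_2) \preceq \cdots$ only advance rightward, so each vertex of $T$ is ``crossed'' by the pointer only a bounded number of times over the entire execution. More precisely, I would define a potential equal to the index (in left-to-right order on $V(T)$) of the current $q_j$ pointer; each inner \textbf{while} iteration that advances $q_j$ decreases this potential by at least one, and between consecutive outer iterations the potential increases by at most the number of elements of $Q$ strictly between $q_{i-1}$ and $q_i$ that are vertices of $T$ --- and these ``increases'' sum telescopically to $O(n)$ across the whole run. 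Therefore the total number of inner iterations is $O(n)$, and the algorithm runs in $O(n)$ time. The main obstacle I anticipate is making the monotonicity lemma $L(q_{i-1}) \preceq L(q_i)$ fully rigorous, since it requires carefully handling the case where $q_i$ and $q_{i-1}$ may lie on the same edge or be a vertex and an interior point of $X$ in close proximity; the clean way around this is to invoke Observation~\ref{obs_l}/Observation~\ref{obs_r} to transfer visibility from an interior point to an adjacent vertex, reducing every case to the pure-vertex situation where Lemma~\ref{order lemma} applies directly.
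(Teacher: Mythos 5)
There is a genuine gap: the monotonicity lemma $L(q_{i-1}) \preceq L(q_i)$, on which your entire amortization rests, is false. Take the terrain $v_1=(0,10)$, $v_2=(1,9)$, $v_3=(2,0)$, $v_4=(3,9)$, $v_5=(4,10)$. The valley bottom $v_3$ is blocked from $v_1$ by $v_2$ (the segment $\overline{v_1v_3}$ passes at height $5$ under $v_2$), so $L(v_3)=v_2$, whereas the peak $v_4$ is visible from $v_1$, so $L(v_4)=v_1\prec L(v_3)$ even though $v_3\prec v_4$. Your attempted derivation from Lemma~\ref{order lemma} does not go through either: with $a=L(q_i)\prec b=L(q_{i-1})\prec c=q_{i-1}\prec d=q_i$ you only know ``$a$ sees $d$'' and ``$b$ sees $c$'', which is not the hypothesis of the order lemma. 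Once monotonicity fails, your potential function (the left-to-right index of the pointer $q_j$) does not telescope: at the start of iteration $i+1$ the pointer is reset from $L(q_i)$ --- which may lie arbitrarily far to the left --- back to $q_i$, and this large increase can recur at every outer iteration; it is certainly not bounded by the number of elements of $Q$ between $q_{i-1}$ and $q_i$.

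The fix, and what the paper's (terse) proof is actually doing, is to amortize along the chain $p, L(p), L(L(p)),\dots$ rather than along vertex positions. In iteration $i$ the pointer walks down exactly this chain starting from $q_{i-1}$ and stops at $L(q_i)$ (by Lemma~\ref{L(xi)}, the first chain element $c$ that $q_i$ sees but whose $L(c)$ it does not see equals $L(q_i)$); iteration $i+1$ then starts at $q_i$, and its very first successful test lands on $L(q_i)$, precisely where iteration $i$ stopped. Defining $d(p)$ as the number of applications of $L$ needed to reach $v_1$ from $p$, iteration $i$ performs $d(q_{i-1})-d(q_i)+1$ successful tests and one failed test, so the total over all $i$ telescopes to at most $2|Q|=O(n)$, with no appeal to monotonicity of $L$. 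Your setup (bounding $|Q|$, isolating the inner-loop count, noting each jump moves strictly left) is fine, but the heart of the argument has to be this chain-concatenation/telescoping step, not a left-to-right sweep.
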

\begin{proof}
We count the number of times $q_i$ sees $L(q_j)$ in the algorithm.
If $q_i$ sees $L(q_j)$, then the algorithm does not visit the vetrices between $q_i$ and $L(q_j)$. Therefore, the number of times $q_i$ sees $L(q_j)$ is at most once for each point of $Q$. If $q_i$ does not see $L(q_j)$, then $q_i$ has found $L(q_i)$. Therefore, the number of times $q_i$ does not see $L(q_j)$ is at most once for each point $Q$.
\end{proof}



After computing $L(x_i)$ and $R(x_i)$ for $X$, we reach the algorithm for the CTGTG in $O(n)$. We divided our algorithm into left and right-guarding, and therefore we provide the algorithm for left-guarding that can be implemented in $O(n)$.

\begin{algorithm}[h]
  \SetAlgoNoEnd
  \caption{Left-guarding}
  \label{A2}
  \KwIn{$T$: terrain}
  \KwIn{$X$: point set}
  \KwOut{$P$ : left-guarding set}
  
  $P$ is null;
  
  $V(T')$=$V(T)$;
  
  \For{$x_i \in X$ processed from left to right } {
    \While {$g(x_i)$ is null}{
    $p_j$ is rightmost vertex in $P \cap V(T')$;
    
     \If{$x_i$ is guarded by $p_j$}
     {
      $g(x_i)$ is $p_j$;
      
      Remove the vertices between $x_i$ and $p_j$ from $V(T')$;
     }
     \ElseIf{$p_j$ on the left side of $L(x_i)$}
       {
         $g(x_i)$ be the vertex $L(x_i)$ ;
         
         Add $g(v_i)$ to $P$;
         
         Remove the vertics between $x_i$ and $L(x_i)$ from $V(T')$;
       }
       \Else
       {Remove $p_j$ from $V(T')$;}
    }     	  
    	}
  \Return $P$
\end{algorithm}

\begin{theorem}
Algorithm~\ref{A2} runs in $O(n)$.
\end{theorem}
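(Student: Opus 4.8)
The plan is to run an amortized charging argument, mirroring the proof that Algorithm~\ref{Compute all $L(x_i)$} runs in $O(n)$. Before counting, I would invoke two facts already in hand: by Lemma~\ref{x lemma} the witness set satisfies $|X|\le 2(n-1)=O(n)$, and the values $L(x_i)$ for every $x_i\in X$ have been precomputed in $O(n)$ time (via Algorithm~\ref{Compute all $L(x_i)$}). What remains is to bound (i) the total number of iterations of the inner \textbf{while} loop over the entire run, (ii) the total cost of the ``remove the vertices between $\dots$'' bulk deletions, and (iii) the per-step cost of locating ``the rightmost vertex in $P\cap V(T')$'' and of the visibility tests.

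For (i), I would classify each \textbf{while}-iteration by which of the three branches it takes. In the first branch ($x_i$ guarded by $p_j$) and the second branch ($p_j$ left of $L(x_i)$) the variable $g(x_i)$ becomes non-null, so the loop terminates and the outer \textbf{for} advances to the next point of $X$; hence at most one iteration of these two kinds is charged to each $x_i$, giving $O(|X|)=O(n)$ of them in total. Each iteration of the third branch deletes the vertex $p_j$ from $V(T')$, and I would charge it to $p_j$. The crucial invariant is that $V(T')$ only ever shrinks: no branch reinserts a vertex, so each vertex of $V(T)$ is charged at most once and there are at most $n$ third-branch iterations. Summing, the \textbf{while} loop runs $O(n)$ times overall. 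For (ii), each assignment of $g(x_i)$ triggers deletion of all vertices strictly between $x_i$ and the chosen guard; since a vertex is deleted at most once across the run, all these deletions together cost $O(n)$.

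For (iii), I would store $V(T')$ as a doubly linked list and keep a pointer (equivalently, a stack) to the rightmost surviving vertex of $P\cap V(T')$: reading $p_j$ is then $O(1)$, deleting $p_j$ or a run of interior vertices is $O(1)$ per vertex, pushing $L(x_i)$ when it is added to $P$ is $O(1)$, and the tests ``$x_i$ guarded by $p_j$'' and ``$p_j$ left of $L(x_i)$'' are $O(1)$ using the precomputed visibility and $L(\cdot)$ data. Combining: $O(n)$ preprocessing, $O(n)$ \textbf{while}-iterations of $O(1)$ amortized cost each, and $O(n)$ aggregate deletion cost, for an $O(n)$ bound.

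The step needing the most care is (iii): I must check that after a bulk deletion the maintained pointer still names the correct rightmost vertex of $P\cap V(T')$, which hinges on $v_1\in P$ (Observation~\ref{include $v_1$ and $v_n$}) acting as a permanent sentinel so $P\cap V(T')$ is never empty, and on arguing that the interior vertices popped between $x_i$ and its guard cannot be elements of $P$ that are still needed. I would also confirm the \textbf{while} loop cannot spin forever: every third-branch pass strictly shrinks $V(T')$, and once $p_j$ has been popped to the left of $L(x_i)$ the second branch fires (using $L(x_i)\in vis(x_i)$), so termination — and hence the validity of the charging — is guaranteed.
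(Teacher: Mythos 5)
Your proposal is correct and follows essentially the same amortized charging argument as the paper: iterations that terminate the inner loop are charged to points of $X$ and vertex-removing iterations to vertices of $V(T)$, each removed at most once, yielding the paper's bound of $2|X|+|V(T)|=O(n)$ visits. Your additional care about the data structure for $V(T')$ and about why a popped $p_j$ can be discarded permanently (which the paper justifies via Lemma~\ref{order lemma}) only makes the same argument more explicit.
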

\begin{proof}
For each $x_i$, we examine whether $x_i$ is guarded by $p_a \in P$ from $x_i$ to $g(x_i)$. If $g(x_i)$ = $g_a$ = $v_b$, then Algorithm~\ref{A2} will not visit the point and vertex between $x_i$ and $v_b$. 
We count the number of times $x_i$ is not seen by $P$.
We can check $p_j$ from $x_i$ to $L(x_i)$. 
If $p_j$ does not see $x_i$, then we will not check $p_j$ for $\{ x_k \mid x_i \prec x_k \}$. 
Assume $p_j$ does not see $x_i$, $p_k$ sees $x_i$, and $p_k \prec p_j \prec x_i$, if $\{x_l \mid x_i \prec x_l \}$ is seen by $p_j$, then $p_k$ sees $\{x_l \mid x_i \prec x_l\}$ according to Lemma~\ref{order lemma}.
The number of times $X$ is not seen by $P$ is $|V(T)|$, and the number of times $X$ is seen by $P$ is $|X|$.
Therefore, the algorithm visits the point and vertex at most $2|X|+|V(T)|$ times. After computing all $L(x_i)$, Algorithm~\ref{A2} runs in $O(n)$.
\end{proof}  

\section{Conclusion}
In this paper, we considered the CTGTG problem and devised an algorithm that can determine the minimal cardinality vertex that guards $T$ under two-sided guarding. We showed that the CTGTG problem can be reduced to the discrete terrain guarding problem with at most $2|V(T)|$ points in $O(n^2)$ and solved the problem using our devised algorithm in $O(n)$ where $n$ is the number of vertices on $T$.

\bibliographystyle{elsarticle-num.bst}
\bibliography{bibliography.bib}

\end{document}